\newcommand{\comm}[1]{}
\def\calI{\mathcal{I}}
\def\calD{\mathcal{D}}
\def\calL{\mathcal{L}}
\def\hs{\hat{s}}
\def\hp{\hat{p}}
\def\tP{\Tilde{P}}
\def\tS{\Tilde{S}}
\def\vd{V\!D}
\def\optp{P_{\text{opt}}}
\newtheorem{observation}{Observation}
\begin{document}

\title{Unweighted Geometric Hitting Set for Line-Constrained Disks and Related Problems\thanks{A preliminary version of this paper will appear in {\em Proceedings of the 49th International Symposium on Mathematical Foundations of Computer Science (MFCS 2024)}. This research was supported in part by NSF under Grant CCF-2300356.}}
\author{Gang Liu 
\and
Haitao Wang 
}

 \institute{
  Kahlert School of Computing\\
  University of Utah, Salt Lake City, UT 84112, USA\\
  \email{u0866264@utah.edu, haitao.wang@utah.edu}
}

\maketitle

\pagestyle{plain}
\pagenumbering{arabic}
\setcounter{page}{1}

\vspace{-0.2in}
\begin{abstract}
Given a set $P$ of $n$ points and a set $S$ of $m$ disks in the plane, the disk hitting set problem asks for a smallest subset of $P$ such that every disk of $S$ contains at least one point in the subset. The problem is NP-hard. In this paper, we consider a line-constrained version in which all disks have their centers on a line. We present an $O(m\log^2n+(n+m)\log(n+m))$ time algorithm for the problem. This improves the previously best result of $O(m^2\log m+(n+m)\log(n+m))$ time for the weighted case of the problem where every point of $P$ has a weight and the objective is to minimize the total weight of the hitting set. Our algorithm actually solves a more general line-separable problem with a single intersection property: The points of $P$ and the disk centers are separated by a line $\ell$ and the boundary of every two disks intersect at most once on the side of $\ell$ containing $P$. 
\end{abstract}

\section{Introduction}
\label{sec:intro}

Let $P$ be a set of $n$ points and $S$ a set of $m$ disks in the plane. The {\em hitting set} problem is to compute a smallest subset of $P$ such that every disk in $S$ contains at least one point in the subset (i.e., every disk is {\em hit} by a point in the subset, and the subset is called a {\em hitting set}). The problem is NP-hard, even if all disks have the same radius~\cite{ref:KarpRe72,ref:MustafaIm10}. Polynomial-time approximation algorithms are known for the problem, e.g.,~\cite{ref:OualiA14,ref:MustafaIm10,ref:BusPr18,ref:EvenHi05,ref:GanjugunteGe11,ref:LiA15}. 


In this paper, we consider the {\em line-constrained} version of the problem, where centers of all disks 
are on a line while the points of $P$ can be anywhere in the plane. 
The weighted case of the problem was studied by Liu and Wang~\cite{ref:LiuGe23}, where 
each point of $P$ has a weight and the objective is to minimize the total weight of the hitting set. Their algorithm runs in $O((m+n)\log(m+n) + \kappa\log m)$ time, where $\kappa$ is the number of pairs of disks that intersect and $\kappa=O(m^2)$ in the worst case. They reduced the runtime to $O((m+n)\log(m+n))$ for the {\em unit-disk case}, where all disks have the same radius~\cite{ref:LiuGe23}.
Our problem in this paper is for the unweighted case. To the best of our knowledge, we are not aware of any previous work that particularly studied the unweighted hitting set problem for line-constrained disks. We propose an algorithm of $O(m\log^2n+(n+m)\log(n+m))$ time, which improves the weighted case algorithm of $O(m^2\log m+(n+m)\log(n+m))$ worst-case time~\cite{ref:LiuGe23}. Perhaps theoretically more interesting is that the worst-case runtime of our algorithm is near linear. 

\subsection{Related work}

A closely related problem is the disk coverage problem, which is to compute a smallest subset of $S$ that together cover all the points of $P$. This problem is also NP-hard because it is dual to the hitting set problem in the unit-disk case (i.e., all disks have the same radius). Polynomial-time algorithms are known for certain special cases, e.g., \cite{ref:AmbuhlCo06,ref:CalinescuSe04,ref:ChanEx14,ref:DasHo10}. In particular, the line-constrained problem (in which all disks are centered on a line) was studied by Pedersen and Wang~\cite{ref:PedersenAl22}. Their algorithm runs in $O((m+n)\log(m+n) + \kappa\log m)$ time, where $\kappa$ is the number of pairs of disks that intersect and $\kappa=O(m^2)$ in the worst case; they also solved the unit-disk case in $O((m+n)\log(m+n))$ time. 
As noted above, in the unit-disk case, the coverage and hitting set problems are dual to each other and therefore the two problems can essentially be solved by the same algorithm. However, this is not the case if the radii of the disks are different.%
\footnote{Note that \cite{ref:DurocherDu15} provides a method to reduce certain coverage problems to instances of the hitting set problem; however, the reduction algorithm, which takes more than $O(n^5)$ time, is not efficient.}

In addition, the $O((m+n)\log(m+n) + \kappa\log m)$ time algorithm of Pedersen and Wang~\cite{ref:PedersenAl22} also works for the weighted {\em line-separable unit-disk} version, where all disks have the same radius and the disk centers are separated from the points of $P$ by a line. 

All the above results are for the weighted case. The unweighted disk coverage case was also particularly studied before. Liu and Wang~\cite{ref:LiuOn23}\footnote{See the arXiv version of \cite{ref:LiuOn23}, which improves the result in the original conference paper. The algorithms follow the same idea, but the arXiv version provides more efficient implementations.} considered the line-constrained problem and gave an $O(m\log n\log m+(n+m)\log(n+m))$ time algorithm. For the line-separable unit-disk case, Amb\"uhl et al.~\cite{ref:AmbuhlCo06} derived an algorithm of $O(m^2n)$ time, which was used as a subroutine in their algorithm for the general coverage problem in the plane (without any constraints).
An improved $O(nm+ n\log n )$ time algorithm is presented in \cite{ref:ClaudeAn10}. Liu and Wang's approach~\cite{ref:LiuOn23} solves this case in $O((n+m)\log (n+m))$ time. 

If disks of $S$ are half-planes, the problem becomes the half-plane coverage problem. 
For the weighted case, Chan and Grant \cite{ref:ChanEx14} proposed an algorithm for the {\em lower-only case} where all half-planes are lower ones; their algorithm runs in $O(n^4)$ time when $m=n$. 
With the observation that a half-plane may be considered as a unit disk of infinite radius, the lower-only half-plane coverage problem is essentially a special case of the line-separable unit-disk coverage problem~\cite{ref:PedersenAl22}. Consequently, applying the algorithm of \cite{ref:PedersenAl22} can solve the weighted lower-only case in $O(n^2\log n)$ time (when $m=n$) and applying the algorithm of \cite{ref:LiuOn23} can solve the unweighted lower-only case in $O(n\log n)$ time. Wang and Xue~\cite{ref:WangAl24} derived another $O(n\log n)$ time algorithm for the unweighted lower-only case with a different approach 
and also proved an $\Omega(n\log n)$ lower bound under the algebraic decision tree model by a reduction from the set equality problem~\cite{ref:Ben-OrLo83} (note that this leads to the same lower bound for the line-separable unit-disk coverage problem). 
For the general case where both upper and lower half-planes are present, Har-Peled and Lee \cite{ref:Har-PeledWe12} solved the weighted problem in $O(n^5)$ time. Pedersen and Wang~\cite{ref:PedersenAl22} showed that the problem can be reduced to $O(n^2)$ instances of the lower-only case problem. Consequently, applying the algorithms of \cite{ref:PedersenAl22} and \cite{ref:LiuOn23} can solve the weighted and unweighted cases in $O(n^4\log n)$ and $O(n^3\log n)$ time, respectively. 
Wang and Xue~\cite{ref:WangAl24} gave a more efficient algorithm of $O(n^{4/3}\log^{5/3}n\log^{O(1)}\log n)$ time for the unweighted case. 


\subsection{Our result}
Instead of solving the line-constrained problem directly, we tackle a more general problem in which the points of $P$ and the centers of the disks of $S$ are separated by a line $\ell$ such that the boundaries of every two disks intersect at most once on the side of $\ell$ containing $P$ (see Fig.~\ref{fig:singleinter}). We refer to it as the {\em line-separable single-intersection} hitting set problem (we will explain it later why this problem is more general than the line-constrained problem).
We present an algorithm of $O(m\log^2n+(n+m)\log(n+m))$ time for the problem. To this end, we find that some points in $P$ are ``useless'' and thus can be pruned from $P$. More importantly, the remaining points have certain properties so that the problem can be reduced to the 1D hitting set problem, which can then be easily solved. The algorithm itself is relatively simple and quite elegant. However, one challenge is to show its correctness, and specifically, to prove why the ``useless'' points are indeed useless. The proof is lengthy and fairly technical, which is one of our main contributions. 

\begin{figure}[t]
\begin{minipage}[t]{\textwidth}
\begin{center}
\includegraphics[height=1.0in]{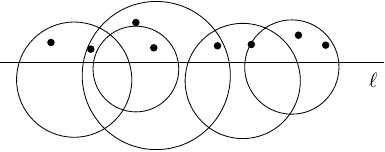}
\caption{\footnotesize Illustrating the line-separable single-intersection case: Centers of all disks are below $\ell$.}
\label{fig:singleinter}
\end{center}
\end{minipage}
\vspace{-0.15in}
\end{figure}

\paragraph{\bf The line-constrained problem.}
To solve the line-constrained problem, where all disks of $S$ are centered on a line $\ell$, the problem can be reduced to the line-separable single-intersection case. Indeed, without loss of generality, we assume that $\ell$ is the $x$-axis. 
For each point $p$ of $P$ below $\ell$, we replace $p$ by its symmetric point with respect to $\ell$. As such, we obtain a set of points that are all above $\ell$. Since all disks are centered on $\ell$, it is not difficult to see that an optimal solution using this new set of points corresponds to an optimal solution using $P$. Furthermore, since disks are centered on $\ell$, although their radii may not be equal, the boundaries of any two disks intersect at most once above $\ell$. 
Therefore, the problem becomes an instance of the line-separable single-intersection case. As such, applying the algorithm for line-separable single-intersection problem solves the line-constrained problem in $O(m\log^2n+(n+m)\log(n+m))$ time.
Therefore, in the rest of the paper, we will focus on solving the line-separable single-intersection problem.

\paragraph{\bf The unit-disk case.}
As mentioned earlier, the unit-disk case problem where all disks have the same radius can be reduced to the coverage problem (and vice versa). More specifically, if we consider the set of unit disks centered at the points of $P$ as a set of ``dual disks'' and consider the centers of the disks of $S$ a set of ``dual points'', then the hitting set problem is equivalent to finding a smallest subset of dual disks whose union covers all dual points. 
Consequently, applying the line-separable unit-disk coverage algorithm in \cite{ref:LiuOn23} solves the hitting set problem in $O((n+m)\log(n+m))$ time. Nevertheless, we show that our technique can directly solve the hitting set problem in this case in the same time complexity.

\paragraph{\bf The half-plane hitting set problem.} 
As in the coverage problem discussed above, if disks of $S$ are half-planes, the problem becomes the half-plane hitting set problem. For the weighted case, the approach of Chan and Grant \cite{ref:ChanEx14} solves the lower-only case in $O(n^4)$ time when $m=n$. Again, with the observation that a half-plane may be viewed as a unit disk of infinite radius, the lower-only half-plane hitting set problem is a special case of the line-separable unit-disk hitting set problem. As such, applying the algorithm of \cite{ref:LiuGe23} can solve the weighted lower-only case in $O(n^2\log n)$ time (when $m=n$) and applying the unit-disk case algorithm discussed above can solve the unweighted lower-only case in $O(n\log n)$ time. 
For the general case where both upper and lower half-planes are present, Har-Peled and Lee \cite{ref:Har-PeledWe12} solved the weighted problem in $O(n^6)$ time. Liu and Wang~\cite{ref:LiuGe23} showed that the problem (for both the weighted and unweighted cases) can be reduced to $O(n^2)$ instances of the lower-only case problem. Consequently, applying the above algorithms for the weighted and unweighted lower-only case problems can solve the weighted and unweighted general case problems in $O(n^4\log n)$ and $O(n^3\log n)$ time, respectively.


\paragraph{\bf Lower bound.}
As discussed above, the $\Omega(n\log n)$ lower bound in \cite{ref:WangAl24} for the lower-only half-plane coverage problem leads to the $\Omega(n\log n)$ lower bound for the line-separable unit-disk coverage problem when $m=n$. As the unit-disk hitting set problem is dual to the unit-disk coverage problem, it also has $\Omega(n\log n)$ as a lower bound. Since the unit-disk hitting set problem is a special case of the line-separable single-intersection hitting set problem, $\Omega(n\log n)$ is also a lower bound of the latter problem. Similarly, since the lower-only half-plane hitting set is dual to the lower-only half-plane coverage, $\Omega(n\log n)$ is also a lower bound of the former problem.

\paragraph{\bf An algorithm in the algebraic decision tree model.} In the algebraic decision tree model, where the time complexity is measured only by the number of comparisons, our method, combining with a technique recently developed by Chan and Zheng~\cite{ref:ChanHo23}, shows that the line-separable single-intersection problem (and thus the line-constrained problem) can be solved using $O((n+m)\log (n+m))$ comparisons, matching the above lower bound. 
To ensure clarity in the following discussion, unless otherwise stated, all time complexities are based on the standard real RAM model.


\paragraph{\bf Outline.} The rest of the paper is organized as follows. After introducing the notation in Section~\ref{sec:pre}, we describe our algorithm in Section~\ref{sec:hitting}. The correctness of the algorithm is proved in Section~\ref{sec:hitcorrect}. Finally, we show how to implement the algorithm efficiently in Section~\ref{sec:hitimplement}. The unit-disk case algorithm is discussed at the end of Section~\ref{sec:hitimplement} as a modification of our general case algorithm. The algebraic decision tree algorithm is also discussed in Section~\ref{sec:hitimplement}.

\section{Preliminaries}
\label{sec:pre}
In this section, we introduce some notation and concepts that will be used throughout the paper. 

As discussed above, we focus on the line-separable single-intersection case. Let $P$ be a set of $n$ points and $S$ a set of $m$ disks in the plane such that the points of $P$ and the centers of the disks of $S$ are separated by a line $\ell$ and the boundaries of every two disks intersect at most once on the side of $\ell$ which contains $P$. Note that the points of $P$ and the disk centers are allowed to be on $\ell$. Without loss of generality, we assume that $\ell$ is the $x$-axis and the points of $P$ are above (or on) $\ell$ while the disk centers are below (or on) $\ell$ (see Fig.~\ref{fig:singleinter}). As such, the boundaries of every two disks intersect at most once above $\ell$. Our goal is to compute a smallest subset of $P$ such that each disk of $S$ is hit by at least one point in the subset. 

Under this setting, for each disk $s\in S$, only its portion above $\ell$ matters for our problem. Hence, unless otherwise stated, a disk $s$ refers only to its portion above (and on) $\ell$. As such, the boundary of $s$ consists of an {\em upper arc}, i.e., the boundary arc of the original disk above $\ell$, and a {\em lower segment}, i.e., the intersection of $s$ with $\ell$. 
Note that $s$ has a single leftmost (resp., rightmost) point, which is the left (resp., right) endpoint of the lower segment of $s$. 

If $P'$ is a subset of $P$ that form a hitting set for $S$, we call $P'$ a {\em feasible solution}. If $P'$ is a feasible solution of minimum size, then $P'$ is an {\em optimal solution}.

We assume that each disk of $S$ is hit by at least one point of $P$ since otherwise there would be no feasible solution. Our algorithm is able to check whether the assumption is met. 

We make a general position assumption that no two points of $A$ have the same $x$-coordinate, where $A$ is the union of $P$ and the set of the leftmost and rightmost points of the upper arcs of all disks. Degenerate cases can be handled by standard perturbation techniques, e.g., \cite{ref:EdelsbrunnerSi90}.

For any point $p$ in the plane, we denote its $x$-coordinate by $x(p)$. We sort all points in $P$ in ascending order of their $x$-coordinates, resulting in a sorted list $\{p_1, p_2,\cdots, p_n\}$. We use $P[i,j]$ to denote the subset $\{p_i, p_{i+1},\cdots, p_j\}$, for any $1\leq i\leq j\leq n$. 
We sort all disks in ascending order of the $x$-coordinates of the leftmost points of their upper arcs; let $\{s_1, s_2,\cdots, s_m\}$ be the sorted list. We use $S[i,j]$ to denote the subset $\{s_i, s_{i+1},\cdots, s_j\}$, for $1\leq i\leq j\leq m$. 
For convenience, let $P[i,j]=\emptyset$ and $S[i,j]=\emptyset$ if $i>j$. 
For each disk $s_i$, let $l_i$ and $r_i$ denote the leftmost and rightmost points of its upper arc, respectively.

For any disk $s\in S$, we use $S_l(s)$ (resp., $S_r(s)$) to denote the subset of disks $S$ whose leftmost points are to the left (resp., right) of that of $s$, that is, if the index of $s$ is $i$, then $S_l(s)=S[1,i-1]$ and $S_r(s)=S[i+1,m]$. For any disk $s'\in S_l(s)$, we also say that $s'$ is {\em to the left} of $s$; similarly, if $s'\in S_r(s)$, then $s'$ is {\em to the right} of $s$. For convenience, if $s'$ is to the left of $s$, we use $s'\prec s$ to denote it. 

For a point $p_i\in P$ and a disk $s_k\in S$, we say that $p_i$ is {\em vertically above} $s_k$ (or $s_k$ is {\em vertically below} $p_i$) if $p_i$ is outside $s_k$ and $x(l_k) < x(p_i) < x(r_k)$. 

\paragraph{\bf The non-containment property.}
If a disk $s_i$ contains another disk $s_j$ completely, then $s_i$ is redundant for our problem since any point hitting $s_j$ also hits $s_i$. It is easy to find those redundant disks in $O(m\log m)$ time (indeed, this is a 1D problem since $s_i$ contains $s_j$ if and only if the lower segment of $s_i$ contains that of $s_j$). Therefore, to solve our problem, we first remove such redundant disks from $S$ and then work on the remaining disks. 
For simplicity, from now on we assume that no disk of $S$ contains another. Therefore, $S$ has the following {\em non-containment} property, which is critical to our algorithm. 

\begin{observation}{\em (Non-Containment Property)}\label{obser:FIFO}
For any two disks $s_i, s_j \in S$, $x(l_i) < x(l_j)$ if and only if $x(r_i) < x(r_j)$.
\end{observation}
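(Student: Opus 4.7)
The plan is to derive the Non-Containment Property as a contradiction consequence of two facts already available: the assumption that no disk in $S$ contains another (established by the preprocessing step just before the observation), and the single-intersection property of the input.

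First, I would establish a geometric lemma: for any two disks $s_i, s_j \in S$, the lower segment of $s_i$ contains that of $s_j$ if and only if $s_i$ (as the region above $\ell$) contains $s_j$. One direction is trivial, since $s_i \supseteq s_j$ implies $[l_i,r_i] \supseteq [l_j,r_j]$. For the nontrivial direction, suppose $x(l_i) \le x(l_j)$ and $x(r_j) \le x(r_i)$. Then the endpoints $l_j$ and $r_j$ of the upper arc of $s_j$ lie on $\ell$ strictly inside the lower segment of $s_i$, hence inside $s_i$. If some point on the upper arc of $s_j$ were outside $s_i$, then as we traverse this arc from $l_j$ to $r_j$ the arc would have to exit and re-enter $\partial s_i$, producing at least two crossings of the two boundary arcs above $\ell$, contradicting the single-intersection property. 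Therefore the whole upper arc of $s_j$ lies inside $s_i$, and together with the fact that the lower segment of $s_j$ is contained in that of $s_i$ this gives $s_j \subseteq s_i$ above $\ell$.

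With this equivalence in hand, the observation follows by contradiction. Suppose $x(l_i) < x(l_j)$. By the general position assumption we have $x(r_i) \neq x(r_j)$, so either $x(r_i) < x(r_j)$ or $x(r_i) > x(r_j)$. In the latter case, the lower segment of $s_i$ strictly contains that of $s_j$, so the geometric lemma yields $s_j \subseteq s_i$, contradicting the assumption that no disk contains another. Hence $x(r_i) < x(r_j)$, which proves the forward implication. The converse is entirely symmetric: assuming $x(r_i) < x(r_j)$ and $x(l_i) > x(l_j)$ would imply $s_i \subseteq s_j$.

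I expect the main obstacle to be the geometric lemma rather than the deduction itself. The subtle point is rigorously ruling out that $s_j$'s upper arc can wander outside $s_i$ and come back; this relies essentially on the single-intersection hypothesis, and it is worth recording explicitly that both arcs have their endpoints on $\ell$ so that a single transverse crossing above $\ell$ cannot be ``undone'' anywhere else. Everything after that lemma is a short case analysis enabled by general position.
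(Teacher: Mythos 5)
Your proof is correct and takes essentially the same route as the paper, which obtains the observation from the preprocessing assumption that no disk of $S$ contains another together with the asserted fact that $s_i$ contains $s_j$ (above $\ell$) if and only if the lower segment of $s_i$ contains that of $s_j$. Your single-intersection argument merely fills in the proof of that equivalence, which the paper states without proof, so the substance matches.
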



\section{The algorithm description}
\label{sec:hitting}

In this section, we describe our algorithm.
We follow the notation defined in Section~\ref{sec:pre}. 

We begin with the following definition, which is critical for our algorithm. 

\begin{definition}
    For each disk $s_i\in S$, among all the points of $P$ covered by $s_i$, define $a(i)$ as the smallest index of these points and $b(i)$ the largest index of them. 
\end{definition}

Since each disk $s_i$ contains at least one point of $P$, both $a(i)$ and $b(i)$ are well defined. 

\begin{definition}\label{def:prune}
   For any point $p_k\in P$, we say that $p_k$ is {\em prunable} if there is a disk $s_i\in S$ such that $p_k\not\in s_i$ and $a(i)<k<b(i)$. 
\end{definition}

In the following, we describe our algorithm. Although the algorithm description seems simple, establishing its correctness is by no means an easy task. 
We devote Section~\ref{sec:hitcorrect} to the correctness proof. The implementation of the algorithm, which is also not straightforward, is presented in Section~\ref{sec:hitimplement}. 

\paragraph{\bf Algorithm description.} 
The algorithm has three main steps. 

\begin{enumerate}
    \item Compute $a(i)$ and $b(i)$ for all disks $s_i\in S$. We will show in Section~\ref{sec:hitimplement} that this can be done in $O(m\log^2n+(n+m)\log(n+m))$ time. 

    \item Find all prunable points; let $Q$ be the set of all prunable points. We will show in Section~\ref{sec:hitimplement} that $Q$ can be computed in $O((n+m)\log(n+m))$ time. 
    
    Let $P^*=P\setminus Q$. We will prove in Section~\ref{sec:hitcorrect} that $P^*$ contains an optimal solution to the hitting problem on $P$ and $S$. This means that it suffices to work on $P^*$ and $S$.
    
    \item Reduce the hitting set problem on $P^*$ and $S$ to a 1D hitting set problem, as follows. For each point of $P^*$, we project it perpendicularly onto $\ell$. Let $\tP$ be the set of all projected points. For each disk $s_i\in S$, we create a segment on $\ell$ whose left endpoint has $x$-coordinate equal to $x(p_{a(i)})$ and whose right endpoint has $x$-coordinate equal to $x(p_{b(i)})$. Let $\tS$ be the set of all segments thus created. 

    We solve the following 1D hitting set problem: Find a smallest subset of points of $\tP$ such that every segment of $\tS$ is hit by a point of the subset. This 1D problem can be easily solved in $O((|\tS|+|\tP|)\log (|\tS|+|\tP|))$ time~\cite{ref:LiuGe23},\footnote{The algorithm in \cite{ref:LiuGe23}, which uses dynamic programming, is for the weighted case where each point has a weight. Our problem is simpler because it is the unweighted case. We can use a simple greedy algorithm to solve it.} which is $O((m+n)\log(m+n))$ since $|\tP|\leq n$ and $|\tS|= m$. 

    Suppose that $\tP_{\text{opt}}$ is any optimal solution for the 1D problem. We create a subset $P^*_{\text{opt}}$ of $P^*$ as follows. For each point of $\tP_{\text{opt}}$, suppose that it is the projection of a point $p_i\in P^*$; then we add $p_i$ to $P^*_{\text{opt}}$. We will prove in Section~\ref{sec:hitcorrect} that $P^*_{\text{opt}}$ is an optimal solution to the hitting set problem on $P^*$ and $S$.
\end{enumerate}

We summarize the result in the following theorem. 

\begin{theorem}\label{theo:hit}
Given a set $P$ of $n$ points and a set $S$ of $m$ disks in the plane such that the disk centers are separated from the points of $P$ by a line and the single-intersection condition is satisfied, 
the hitting set problem for $P$ and $S$ is solvable in $O(m\log^2n+(n+m)\log(n+m))$ time.
\end{theorem}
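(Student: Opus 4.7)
The plan is to establish two things—correctness of the algorithm and the claimed time bound for each of its three steps—and the theorem then follows by summation.

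For correctness I would split the argument into two parts. Part~A asserts that the pruned set $P^*=P\setminus Q$ still contains some optimal hitting set for $(P,S)$. Part~B asserts that the 1D hitting set instance on $(\tP,\tS)$ is equivalent to the hitting set instance on $(P^*,S)$. Part~B is the straightforward half: for every $s_i\in S$ and every $p_j\in P^*$, one has $p_j\in s_i$ iff $a(i)\le j\le b(i)$. The ``only if'' direction follows from the definition of $a(i)$ and $b(i)$; for ``if,'' the endpoints $j=a(i),b(i)$ hold by definition, while any interior $j$ with $p_j\notin s_i$ would make $p_j$ prunable and hence excluded from $P^*$. Since projecting onto $\ell$ preserves $x$-order, this equivalence is precisely the 1D instance $(\tP,\tS)$; so any 1D optimum lifts to an optimum for $(P^*,S)$, and combined with Part~A, to one for $(P,S)$.

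The substantive obstacle is Part~A. I plan an exchange argument driven by the single-intersection property. Take an optimal solution $\opt$ that minimizes the number of prunable points it contains, and assume for contradiction that it still uses some prunable $p_k$, witnessed by a disk $s_i$ with $a(i)<k<b(i)$ and $p_k\notin s_i$. Consider the subset of disks hit by $p_k$ in $\opt$ but by no other point of $\opt$; the aim is to show that one of $p_k$'s non-prunable neighbors already hits all of them, enabling a size-preserving swap that strictly reduces the number of prunable points in $\opt$ and yields a contradiction. The single-intersection hypothesis keeps the geometry tractable: above $\ell$ the boundaries of any two disks cross at most once, so the region of one disk lying outside another is a single cap adjacent to $\ell$. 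Combined with the non-containment property (Observation~\ref{obser:FIFO}), this should impose a clean left-to-right structure on the disks containing $p_k$ relative to $s_i$ that drives the swap. Executing this case analysis cleanly is what I expect to consume Section~\ref{sec:hitcorrect} and constitutes the most delicate part of the proof.

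For the time bound, step~1 computes $a(i)$ and $b(i)$ for each disk $s_i$, which can be done in $O(m\log^2 n+(n+m)\log(n+m))$ time by appropriate geometric data structures that exploit the non-containment property of $S$; the details appear in Section~\ref{sec:hitimplement}. Step~2 marks the prunable points in $O((n+m)\log(n+m))$ time by sweeping the intervals $[a(i),b(i)]$ over the sorted $P$ and testing membership in the corresponding disks. Step~3 is the classical 1D hitting set instance $(\tP,\tS)$ with $|\tP|\le n$ and $|\tS|=m$, solvable by a greedy sweep in $O((n+m)\log(n+m))$ time. Summing the three contributions yields $O(m\log^2 n+(n+m)\log(n+m))$ time overall, completing the theorem.
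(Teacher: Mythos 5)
Your Part~B and the outline of the three time bounds match the paper, but Part~A --- which you yourself identify as the substance of the theorem --- has a genuine gap rather than a proof. Your plan is: take an optimal solution minimizing the number of prunable points it contains, and for a prunable $p_k$ in it (witnessed by $s_i$ with $p_k\notin s_i$, $a(i)<k<b(i)$) perform a single swap with ``one of $p_k$'s non-prunable neighbors.'' The difficulty is that no such non-prunable replacement is guaranteed to exist locally: Observation~\ref{obser:hitprune} only yields a replacement point of $P(s_i)$ on the opposite side of $p_k$ from the point of the optimum that hits $s_i$, and that replacement may itself be prunable, in which case your swap does not strictly decrease the count of prunable points and the minimality argument collapses. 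This is exactly why the paper's proof of Lemma~\ref{lem:hitprune} is not a one-shot exchange but an iterative replacement process: each iteration produces a new candidate $\hp_{i+1}$ (possibly again prunable), and the entire technical burden is to show the process never revisits a previously used candidate, i.e., $\hp_{i+1}\neq\hp_k$ for all earlier $k$. The hard case ($x(\hp_k)<x(\hp_i)$, requiring $\hp_k\notin\hs_i$) occupies Section~\ref{sec:LemObserProof} and the long induction of Lemma~\ref{lemma:lmPruningPoints}; nothing in your sketch supplies a substitute for this argument, and the phrase ``this should impose a clean left-to-right structure \ldots that drives the swap'' is precisely the part that needs proving.

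A secondary, smaller gap: your step~2 implementation (``sweeping the intervals $[a(i),b(i)]$ over the sorted $P$ and testing membership in the corresponding disks'') does not obviously run in $O((n+m)\log(n+m))$ time, since a single point may lie in the intervals of $\Theta(m)$ disks and naive membership testing is $\Theta(nm)$. The paper's Lemma~\ref{lem:50} gets the bound by storing each interval in $O(\log n)$ nodes of a segment tree, maintaining at each node the common intersection of the stored disks (using the single-intersection and non-containment properties), and answering the $O(\log n)$ per-point queries with fractional cascading. You would need some equivalent mechanism; as written, the claimed bound for step~2 is asserted, not established.
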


\section{Algorithm correctness}
\label{sec:hitcorrect}

In this section, we prove the correctness of our algorithm. More specifically, we will argue the correctness of the second and the third main steps of the algorithm. 

In the following, we start with the third main step, as it is relatively straightforward. In fact, arguing the correctness of the second main step is quite challenging and is a main contribution of our paper. 

\paragraph{\bf Correctness of the third main step.}

For each disk $s_i\in S$, let $s_i'$ refer to the segment of $\tS$ created from $s_i$. For each point $p_j\in P$, let $p_j'$ refer to the point of $\tP$ which is the projection of $p_j$. The following lemma justifies the correctness of the third main step of the algorithm. 

\begin{lemma}
    A point $p_j\in P^*$ hits a disk $s_i\in S$ if and only if $p_j'$ hits $s_i'$. 
\end{lemma}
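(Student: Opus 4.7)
The plan is to prove each direction separately by translating the geometric hitting relations into index inequalities, and to invoke Definition~\ref{def:prune} only in the nontrivial direction. The key auxiliary fact is that, since the points of $P$ are sorted in ascending $x$-order and $s_i'$ is by construction the segment on $\ell$ from $x(p_{a(i)})$ to $x(p_{b(i)})$, we have $p_j'\in s_i'$ if and only if $a(i)\le j\le b(i)$. This observation reduces everything to manipulating the indices $a(i)$, $b(i)$, $j$.

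For the ``only if'' direction, I would start from $p_j\in P^*$ hitting $s_i$, so that $p_j\in s_i$. By the very definition of $a(i)$ and $b(i)$ as the smallest and largest indices of points of $P$ covered by $s_i$, we have $a(i)\le j\le b(i)$, and the auxiliary fact above then gives $p_j'\in s_i'$. This direction is essentially bookkeeping and does not use the fact that $p_j\in P^*$ rather than $p_j\in P$.

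For the ``if'' direction, suppose $p_j'$ hits $s_i'$, so $a(i)\le j\le b(i)$ by the auxiliary fact. The boundary cases $j=a(i)$ and $j=b(i)$ are handled directly by the definition of $a(i),b(i)$, which guarantees $p_j\in s_i$. The remaining case is where pruning matters: if $a(i)<j<b(i)$ and, toward a contradiction, $p_j\notin s_i$, then Definition~\ref{def:prune} declares $p_j$ prunable, so $p_j\in Q$. This contradicts $p_j\in P^*=P\setminus Q$, and therefore $p_j\in s_i$, i.e., $p_j$ hits $s_i$.

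The main obstacle is not located inside this lemma at all: once the hypothesis $p_j\in P^*$ is available, the statement is almost immediate from the definitions of $a(i)$, $b(i)$, and prunability, together with the sorted order of $P$. The real difficulty lies upstream, in showing that restricting to $P^*$ does not destroy any optimal hitting set, which is the substantive content of the second main step analyzed elsewhere in Section~\ref{sec:hitcorrect}; the present lemma simply encodes the syntactic shadow of that pruning into the reduction to the 1D hitting set problem.
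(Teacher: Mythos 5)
Your proof is correct and follows essentially the same route as the paper: the forward direction is the definitional bookkeeping $p_j\in s_i \Rightarrow a(i)\le j\le b(i) \Rightarrow p_j'\in s_i'$, and the reverse direction handles the boundary indices by definition of $a(i),b(i)$ and the interior case $a(i)<j<b(i)$ by noting that $p_j\notin s_i$ would make $p_j$ prunable, contradicting $p_j\in P^*$. Your explicit use of the sorted $x$-order to equate $p_j'\in s_i'$ with $a(i)\le j\le b(i)$ is exactly the (implicit) step in the paper's argument, so there is nothing further to add.
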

\begin{proof}
    Suppose $p_j$ hits $s_i$. Then, $p_j\in s_i$. By the definitions of $a(i)$ and $b(i)$, we have $a(i)\leq j\leq b(i)$. Hence, $x(p_{a(i)})\leq x(p_j)\leq x(p_{b(i)})$, and thus $p_j'$ hits $s_i'$ by the definitions of $p_j'$ and $s_i'$. 

    On the other hand, suppose that $p_j'$ hits $s_i'$. Then, according to the definitions of $p_j'$ and $s_i'$, $x(p_{a(i)})\leq x(p_j)\leq x(p_{b(i)})$ holds. If $j=a(i)$ or $j=b(i)$, then $p_j$ must hit $s_i$ following the definitions of $a(i)$ and $b(i)$. Otherwise, we have $a(i)< j< b(i)$. 
    Observe that $p_j$ must be inside $s_i$ since otherwise $p_j$ would be a prunable point and therefore could not be in $P^*$. As such, $p_j$ must hit $s_i$.
    \qed
\end{proof}

\subsection{Correctness of the second main step}

In what follows, we focus on the correctness of the second main step. 

For any disk $s$, let $P(s)$ denote the subset of points of $P$ inside $s$. For any point $p$, let $S(p)$ denote the subset of disks of $S$ hit by $p$. For any subset $P'\subseteq P$, by slightly abusing notation, let $S(P')$ denote the subset of disks of $S$ hit by at least one point of $P'$, i.e., $S(P')=\bigcup_{p\in P'}S(p)$. 

The following observation follows directly from the definition of prunable points. 

\begin{observation}\label{obser:hitprune}
    Suppose a point $p$ is a prunable point in $P$. Then, there is a disk $s\in S$ vertically below $p$ such that the following are true.  
    \begin{enumerate}    
        \item $P(s)$ has both a point left of $p$ and a point right of $p$.
        \item 
        For any two points $p^l, p^r\in P(s)$ with one left of $p$ and the other right of $p$, we have $S(p)\subseteq S(p^l)\cup S(p^r)$.         
    \end{enumerate}
\end{observation}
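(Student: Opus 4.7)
The plan is to let $s=s_j$ be the disk witnessing that $p$ is prunable (Definition~\ref{def:prune}): $p\notin s_j$ and, writing $p=p_i$, $a(j)<i<b(j)$. Since $p_{a(j)},p_{b(j)}\in P(s_j)$ and $P$ is sorted by $x$-coordinate, we have $x(l_j)\le x(p_{a(j)})<x(p)<x(p_{b(j)})\le x(r_j)$; together with $p\notin s_j$, this shows $s_j$ is vertically below $p$, and the same two points $p_{a(j)},p_{b(j)}$ already serve as the witnesses required for Property~1.

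For Property~2, fix any $p^l,p^r\in P(s_j)$ with $x(p^l)<x(p)<x(p^r)$ and any $t\in S(p)$; it suffices to show $p^l\in t$ or $p^r\in t$. Since $p\in t$ but $p\notin s_j$, the disks $s_j$ and $t$ are distinct. By the Non-Containment Property (Observation~\ref{obser:FIFO}) the two possibilities are symmetric under left/right reflection, so I would assume $x(l_j)<x(l_t)$ and $x(r_j)<x(r_t)$ and aim to prove $p^r\in t$; the reversed case $x(l_t)<x(l_j)$ yields $p^l\in t$ by the mirror argument.

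The heart of the proof is a transversality analysis of the upper arcs of $s_j$ and $t$ above $\ell$. Let $h_s(x),h_t(x)$ denote the heights of the two caps at horizontal coordinate $x$. Because $x(p)\in(x(l_j),x(r_j))$ and $x(p)\in[x(l_t),x(r_t)]$, the interval $(x(l_t),x(r_j))$ is nonempty, and at its endpoints $h_s(x(l_t))>0=h_t(x(l_t))$ while $h_t(x(r_j))>0=h_s(x(r_j))$. By continuity the two upper arcs must meet above $\ell$, and by the single-intersection hypothesis they meet at a unique point $q$, with $h_s(x)>h_t(x)$ for $x\in(x(l_t),x(q))$ and $h_t(x)>h_s(x)$ for $x\in(x(q),x(r_j))$. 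From $p\in t\setminus s_j$ we get $h_t(x(p))\ge y(p)>h_s(x(p))$, which forces $x(p)>x(q)$. Then $x(p^r)\in(x(p),x(r_j)]\subseteq(x(q),x(r_t))$ lies in the region where $h_s\le h_t$, so $y(p^r)\le h_s(x(p^r))\le h_t(x(p^r))$, giving $p^r\in t$ as required.

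The main obstacle is exactly this transversality step: one must combine the single-intersection hypothesis with the non-containment property (invoking the general-position assumption on the set $A$ to rule out tangential and endpoint coincidences) to conclude that $h_s$ and $h_t$ swap order exactly once inside $(x(l_t),x(r_j))$. Once that is in hand, locating $p$ on the correct side of $q$ and reading off $p^r\in t$ from the height inequality is essentially routine.
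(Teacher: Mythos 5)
Your proposal is correct and follows essentially the same route as the paper: Property~1 comes directly from the definition of prunable points with $p_{a(j)},p_{b(j)}$ as witnesses, and Property~2 is proved by splitting on whether the hitting disk $t$ lies left or right of $s$ and showing that the portion of $s$ on the corresponding side of $p$ is contained in $t$. Your explicit height-function/intermediate-value analysis of the single crossing point $q$ is just a spelled-out version of the containment the paper asserts from the non-containment and single-intersection properties (and later formalizes as Observation~\ref{obser:fifo10}).
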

\begin{proof}
The first statement directly follows the definition of prunable points. For the second statement, without loss of generality, assume that $p^l$ is left of $p$ while $p^r$ is right of $p$ (see Fig.~\ref{fig:obser2}). Consider any disk $s_i\in S(p)$. By definition, $p\in s_i$. 
As $p\not\in s$, $s_i\neq s$. Hence, $s_i$ is either in $S_l(s)$ or in $S_r(s)$. If $s_i\in S_l(s)$, then due to the non-containment property, $s_i$ must contain the area of $s$ to the left of $p$ and therefore must contain $p^l$, which implies $s_i\in S(p^l)$. Similarly, if $s_i\in S_r(s)$, then $s_i$ must be in $S(p^r)$. 
\qed
\end{proof}

\begin{figure}[t]
\begin{minipage}[t]{\textwidth}
\begin{center}
\includegraphics[height=0.9in]{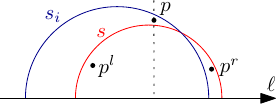}
\caption{\footnotesize Illustrating the proof of Observation~\ref{obser:hitprune}.}
\label{fig:obser2}
\end{center}
\end{minipage}
\vspace{-0.15in}
\end{figure}

The following lemma establishes the correctness of the second main step of the algorithm. 

\begin{lemma}\label{lem:hitprune}
$P^*$ contains an optimal solution for the hitting set problem on $S$ and $P$. 
\end{lemma}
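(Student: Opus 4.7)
The plan is to show that any optimal solution $\opt$ for the hitting set problem on $P$ and $S$ can be transformed into another optimal solution contained in $P^*$, via a sequence of swaps justified by Observation~\ref{obser:hitprune}. Concretely, I would choose an optimal solution $\opt$ that minimizes a carefully designed potential $\Phi(\opt)$ (e.g., the number of prunable points in $\opt$, with ties broken by some lexicographic rule on indices), and then derive a contradiction from the assumption $\opt \cap Q \neq \emptyset$.

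Assume for contradiction that $\opt \cap Q \neq \emptyset$, and pick any $p \in \opt \cap Q$. Let $s$ be the disk vertically below $p$ guaranteed by Observation~\ref{obser:hitprune}; since $\opt$ hits $s$ and $p \notin s$, there exists some $q \in \opt \cap P(s)$. I would split into two cases according to whether $\opt$ contains points of $P(s)$ on both sides of $p$ or only on one side.

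In the first case, let $p^l, p^r \in \opt \cap P(s)$ with $p^l$ to the left and $p^r$ to the right of $p$. Observation~\ref{obser:hitprune} then gives $S(p) \subseteq S(p^l) \cup S(p^r)$, so $\opt \setminus \{p\}$ is still a hitting set, contradicting the size-minimality of $\opt$; hence this case cannot occur. In the second case, assume without loss of generality that every point of $\opt \cap P(s)$ lies to the left of $p$, and set $\opt' := (\opt \setminus \{p\}) \cup \{p^r\}$ for some carefully chosen $p^r \in P(s)$ to the right of $p$. Observation~\ref{obser:hitprune} (with $p^l := q$) shows that $\opt'$ is still a hitting set and $|\opt'| \leq |\opt|$, so $\opt'$ is also optimal; the goal is then to pick $p^r$ so that $\Phi(\opt') < \Phi(\opt)$, contradicting the minimality of $\Phi(\opt)$.

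The main obstacle is choosing $p^r$ and $\Phi$ so that this strict decrease always occurs. If some point of $P(s)$ to the right of $p$ is non-prunable, we pick it for $p^r$ and the simple potential $\Phi(\opt)=|\opt \cap Q|$ drops by one. The hard subcase is when every point of $P(s)$ strictly to the right of $p$ is itself prunable, so a single swap may not decrease the count of prunable points. My plan for this subcase is to iterate: take $p^r := p_{b(i)}$ where $i$ is the index of $s$ (so $p^r$ is the rightmost point of $P(s)$), and if $p^r$ is still prunable, apply the same procedure to its own witness disk. Each iteration strictly increases the index of the ``current'' prunable point, so the chain terminates in at most $n$ steps at a non-prunable point, yielding a net decrease of one in the number of prunable points in the solution. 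The delicate part is showing that a valid witness $q$ on the correct (left) side is always available after each iteration, so the pushing chain never needs to backtrack; this is where I expect the non-containment property (Observation~\ref{obser:FIFO}) and a careful choice of $p$ inside $\opt$ (for instance, the rightmost prunable point of $\opt$) to enter the argument. The symmetric subcase in which $\opt \cap P(s)$ lies entirely to the right of $p$ is handled by the mirror-image ``push left'' argument.
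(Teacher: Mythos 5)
Your overall strategy (repeatedly exchange a prunable point of an optimal solution for another point, using Observation~\ref{obser:hitprune} to preserve feasibility, until a non-prunable point is reached) is the same exchange framework the paper uses, and two of your steps are sound: if the solution hits the witness disk on both sides of the prunable point you may simply delete that point, contradicting optimality; and a single swap to the opposite side preserves feasibility and optimality. The gap is in the termination argument, which is exactly the hard part of the lemma. Your claim that the chain can be made index-monotone ("each iteration strictly increases the index of the current prunable point") is not something you get to arrange by choosing $p^r=p_{b(i)}$: the direction of each swap is \emph{forced} by which side of the current prunable point the current solution happens to hit the new witness disk. Concretely, after you swap in $p_{b(i)}$ and it turns out to be prunable with witness disk $s'$, to push right again you need a solution point inside $s'$ to the \emph{left} of $p_{b(i)}$; if all solution points of $P(s')$ lie to its right, the only available swap goes left. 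Nothing in Observation~\ref{obser:FIFO} or in starting from the rightmost prunable point rules this out (one can check, using Observation~\ref{obser:fifo10}, that the constraints force $s\prec s'$ and that the left portion of $P(s')$ lies inside $s$, but they do not force the solution to hit $s'$ on the left), so the chain can reverse direction and your potential argument collapses. You acknowledge this as "the delicate part," but it is not a routine verification to be filled in later: it is the core of the proof.

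The paper does not prove monotonicity at all; its process explicitly allows the candidate points $\hp_1,\hp_2,\ldots$ to zigzag left and right, and instead proves the weaker statement that no candidate is ever repeated, which suffices for termination since $Q$ is finite. Establishing even that weaker statement requires substantial machinery: the auxiliary disk $s^*$ hit by all candidates but by no point of $A_i$ (Observation~\ref{obser:diskSstar}), the left/right ordering of the witness disks relative to $s^*$ (Observations~\ref{obser:diskS_star} and~\ref{obser:si}), and the long double induction of Lemma~\ref{lemma:lmPruningPoints} in Sections~\ref{sec:LemObserProof} and~\ref{sec:lmPruningPoints}. Your proposal, as written, replaces all of this with an unproven (and, as far as the stated constraints go, likely false) monotonicity claim, so the argument is incomplete at its decisive step. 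A second, smaller point: the swap also requires the tracked "opposite-side" hit to remain available as the solution evolves across iterations; the paper handles this bookkeeping with the sets $A_k$, and your proposal has no analogue of it.
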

\begin{proof}
Let $\optp$ be an optimal solution for $S$ and $P$. Let $Q$ be the set of all prunable points. Recall that $P^*=P\setminus Q$. If $\optp\cap Q=\emptyset$, then $\optp\subseteq P^*$ and thus the lemma is vacuously true. In what follows, we assume that $|\optp\cap Q|\geq 1$. 

Pick an arbitrary point from $\optp\cap Q$, denoted by $\hp_1$. Below, we give a process that can find a point $p^*$ from $P^*$ to replace $\hp_1$ in $\optp$ such that the new set $\optp^1=\{p^*\}\cup \optp\setminus\{\hp_1\}$ is a feasible solution, implying that $\optp^1$ is still an optimal solution since $|\optp^1|=|\optp|$. As $p^*\in P^*$, we have $|\optp^1\cap Q|=|\optp\cap Q|-1$. Therefore, if $\optp^1\cap Q$ is still nonempty, then we can repeat the process for other points in $\optp^1\cap Q$ until we obtain an optimal solution $\optp^*$ with $\optp^*\cap Q=\emptyset$, which will prove the lemma. 
The process involves induction. To help the reader understand it better, we first provide the details for the first two iterations of the process (we will introduce some notation that appears unnecessary for the first two iterations, but these will be needed for explaining the inductive hypothesis later). 

\paragraph{\bf The first iteration.}
Let $\optp'=\optp\setminus\{\hp_1\}$. Since $\hp_1\in Q$, by Observation~\ref{obser:hitprune}, $S$ has a disk $\hs_1$ vertically below $\hp_1$ such that $P(\hs_1)$ contains both a point left of $\hp_1$, denoted by $\hp_1^l$, and a point right of $\hp_1$, denoted by $\hp_1^r$. Furthermore, $S(\hp_1)\subseteq S(\hp_1^l)\cup S(\hp_1^r)$. Since $\hp_1\not\in \hs_1$ and $\optp=\optp'\cup \{\hp_1\}$ forms a hitting set of $P$, $\optp'$ must have a point $p$ that hits $\hs_1$. Clearly, $p$ is left or right of $\hp_1$. Without loss of generality, we assume that $p$ is right of $\hp_1$. Since $\hp_1^r$ refers to an arbitrary point to the right of $\hp_1$ that hits $\hs_1$ and $p$ is also a point to right of $\hp_1$ that hits $\hs_1$, for notational convenience, we let $\hp_1^r$ refer to $p$. As such, $\hp_1^r$ is in $\optp'$. 

Consider the point $\hp_1^l$. Since $S(\hp_1)\subseteq S(\hp_1^l)\cup S(\hp_1^r)$ and $\hp_1^r$ is in $\optp'$, it is not difficult to see that $S(\optp)\subseteq S(\optp')\cup S(\hp_1^l)$ and thus $\optp'\cup \{\hp_1^l\}$ is a feasible solution. As such, if $\hp_1^l\not\in Q$, then we can use $\hp_1^l$ as our target point $p^*$ and our process (to find $p^*$) is performed. In what follows, we assume $\hp_1^l\in Q$. 

We let $\hp_2=\hp_1^l$. Define $A_1=\{\hp_1^r\}$. According to the above discussion, $A_1\subseteq \optp'$, $S(\hp_1)\subseteq S(A_1)\cup S(\hp_2)$, $\optp'\cup \{\hp_2\}$ is a feasible solution, $\hp_1$ is vertically above $\hs_1$, and $\hp_2\in \hs_1$.  

\paragraph{\bf The second iteration.}
We are now entering the second iteration of our process. First, notice that $\hp_2$ cannot be $\hp_1$ since $\hp_2=\hp_1^l$, which cannot be $\hp_1$. Our goal in this iteration is to find a {\em candidate point} $p'$ to replace $\hp_2$ so that $\optp'\cup \{p'\}$ also forms a hitting set of $S$. Consequently, if $p'\not\in Q$, then we can use $p'$ as our target $p^*$; otherwise, we need to guarantee $p'\neq \hp_1$ so that our process will not enter a loop. 
The discussion here is more involved than in the first iteration. 

Since $\hp_2\in Q$, by Observation~\ref{obser:hitprune}, $S$ has a disk $\hs_2$ vertically below $\hp_2$ 
such that $P(\hs_2)$ contains both a point left of $\hp_2$, denoted by $\hp_2^l$, and a point right of $\hp_2$, denoted by $\hp_2^r$. Furthermore, $S(\hp_2)\subseteq S(\hp_2^l)\cup S(\hp_2^r)$. 
Depending on whether $\hs_2$ is in $S(A_1)$, there are two cases. 

\begin{itemize}
    \item 
If $\hs_2\not\in S(A_1)$, since $\hp_2$ does not hit $\hs_2$ and $S(\hp_1)\subseteq S(A_1)\cup S(\hp_2)$, we obtain $\hs_2\not\in S(\hp_1)$. Now we can basically repeat our argument from the first iteration. 
Since $\hp_2$ does not hit $\hs_2$ and $\optp'\cup \{\hp_2\}$ is a feasible solution, $\optp'$ must have a point $p$ that hits $\hs_2$. Clearly, $p$ is either left or right of $\hp_2$. 

We first assume that $p$ is right of $\hp_2$. Since $\hp_2^r$ refers to an arbitrary point to the right of $\hp_2$ that hits $\hs_2$ and $p$ is also a point to right of $\hp_2$ that hits $\hs_2$, for notational convenience, we let $\hp_2^r$ refer to $p$. As such, $\hp_2^r$ is in $\optp'$. 

We let $\hp_2^l$ be our candidate point, which satisfies our need as discussed above for $p'$. Indeed, since $\optp'\cup \{\hp_2\}$ is an optimal solution, $S(\hp_2)\subseteq S(\hp_2^l)\cup S(\hp_2^r)$, and $\hp_2^r\in \optp'$, we obtain that $\optp'\cup \{\hp_2^l\}$ also forms a hitting set of $S$. Furthermore, since $\hp_2^l$ hits $\hs_2$ while $\hp_1$ does not, we know that $\hp_2^l\neq \hp_1$. Therefore, if $\hp_2^l\not\in Q$, then we can use $\hp_2^l$ as our target $p^*$ and we are done with the process. Otherwise, we let $\hp_3=\hp_2^l$ and then enter the third iteration. In this case, we let $A_2=A_1\cup \{\hp_2^r\}$. According to our above discussion, $A_2\subseteq \optp'$, $S(\hp_2)\subseteq S(A_2)\cup S(\hp_3)$, $\{\hp_3\}\cup \optp'$ is a feasible solution, $\hp_2$ is vertically above $\hs_2$, and $\hp_3\in \hs_2$.  

The above discussed the case where $p$ is right of $\hp_2$. 
If $p$ is left of $\hp_2$, then the analysis is symmetric.\footnote{More specifically, if $\hp_2^r \not\in Q$, then we can use $\hp_2^r$ as our target $p^*$ and the process is complete. Otherwise, we let $\hp_3 = \hp_2^r$ and enter the third iteration; in this case, we let $A_2=A_1\cup \{\hp_2^l\}$.}

\item 
If $\hs_2\in S(A_1)$, then we let $\hp_2^l$ be our candidate point. We show in the following that it satisfies our need as discussed above for $p'$, i.e., $\{\hp_2^l\}\cup \optp'$ forms a hitting set of $S$ and $\hp_2^l\neq \hp_1$. 

Indeed, since $A_1=\{\hp_1^r\}$ and $\hs_2\in S(A_1)$, $\hs_2$ is hit by $\hp_1^r$. Since $\hp_1^r$ is to the right of $\hp_1$, and $\hp_2$, which is $\hp_1^l$, is to the left of $\hp_1$, we obtain that $\hp_1^r$ is to the right of $\hp_2$. 
Since $\hp_2^l$ hits $\hs_2$, $\hp_2^l$ is to the left of $\hp_2$, $\hp_1^r$ hits $\hs_2$, and $\hp_1^r$ is to the right of $\hp_2$, by Observation~\ref{obser:hitprune}, $S(\hp_2)\subseteq S(\hp_2^l) \cup S(\hp_1^r)$, i.e., $S(\hp_2)\subseteq S(\hp_2^l) \cup S(A_1)$. Since $\optp'\cup \{\hp_2\}$ is a feasible solution and $A_1\subseteq \optp'$, it follows that $\{\hp_2^l\}\cup \optp'$ is also a feasible solution. On the other hand, since $\hp_2^l$ is to the left of $\hp_2$ while $\hp_2$ (which is $\hp_1^l$) is to the left of $\hp_1$, we know that $\hp_2^l$ is to the left of $\hp_1$ and thus 
$\hp_2^l\neq \hp_1$. 

As such, if $\hp_2^l\not\in Q$, we can use $\hp_2^l$ as our target $p^*$ and we are done with the process. Otherwise, we let $\hp_3=\hp_2^l$ and continue with the third iteration. In this case, we let $A_2=A_1$. According to our above discussion, $A_2\subseteq \optp'$, $S(\hp_2)\subseteq S(A_2)\cup S(\hp_3)$, $\optp'\cup \{\hp_3\}$ is a feasible solution, $\hp_2$ is vertically above $\hs_2$, and $\hp_3\in \hs_2$.  
\end{itemize}

This finishes the second iteration of the process.

\paragraph{\bf Inductive step.} In general, suppose that we are entering the $i$-th iteration of the process with the point $\hp_i\in Q$, $i\geq 2$. We make the following inductive hypothesis for $i$.

\begin{enumerate}
    \item We have points $\hp_k\in Q$ for all $k=1,2,\ldots,i-1$ in the previous $i-1$ iterations such that $\hp_i\neq \hp_k$ for any $1\leq k\leq i-1$
    \item We have subsets $A_k$ for all $k=1,2,\ldots,i-1$ such that $A_1\subseteq A_2\subseteq\cdots \subseteq A_{i-1}\subseteq\optp'$, and $S(\hp_k)\subseteq S(A_k)\cup S(\hp_{k+1})$ holds for each $1\leq k\leq i-1$. 
    \item For any $1\leq k\leq i$, $\{\hp_k\}\cup \optp'$ is a feasible solution.  
    \item We have disks $\hs_k\in S$ for $k=1,2,\ldots,i-1$ such that $\hs_k$ is vertically below $\hp_k$ and $\hp_{k+1}\in \hs_k$.
    \end{enumerate}

Our previous discussion already established the hypothesis for $i=2$ and $i=3$. 
Next, we proceed with the $i$-th iteration argument for any general $i\geq 4$. Our goal is to find a candidate point $\hp_{i+1}$ such that $\optp'\cup\{\hp_{i+1}\}$ is a feasible solution and the inductive hypothesis still holds for $i+1$. 

Since $\hp_i\in Q$, by Observation~\ref{obser:hitprune}, there is a disk $\hs_i$ vertically below $\hp_i$ such that $P(\hs_i)$ has a point left of $\hp_i$, denoted by $\hp_i^l$, and a point right of $\hp_i$, denoted by $\hp_i^r$.
Furthermore, $S(\hp_i)\subseteq S(\hp_i^l)\cup S(\hp_i^r)$. 
Depending on whether $\hs_i$ is in $S(A_{i-1})$, there are two cases. 

\begin{enumerate}
    \item If $\hs_i\not\in S(A_{i-1})$, then since $\hs_i$ does not contain $\hp_i$ and $\optp'\cup \{\hp_i\}$ is a feasible solution, $\optp'$ must have a point $p$ that hits $\hs_i$. Clearly, $p$ is to the left or right of $\hp_i$. Without loss of generality, we assume that $p$ is to the right of $\hp_i$. Since $\hp_i^r$ refers to an arbitrary point to the right of $\hp_i$ that hits $\hs_i$ and $p$ is also a point to the right of $\hp_i$ that hits $\hp_i$, for notational convenience, we let $\hp_i^r$ refer to $p$. As such, $\hp_i^r$ is in $\optp'$. 

    We let $\hp_{i+1}$ be $\hp_{i}^l$ and define $A_{i}=A_{i-1}\cup \{\hp_{i}^r\}$. In the following, we argue that the inductive hypothesis holds. 

    \begin{itemize}
    \item First of all, by definition, $\hp_i$ is vertically above $\hs_i$ and $\hp_{i+1}\in \hs_i$. Hence, the fourth statement of the hypothesis holds. 
        \item 
        Since $\{\hp_i\}\cup \optp'$ is a feasible solution, $S(\hp_i)\subseteq S(\hp_i^l)\cup S(\hp_i^r)$, $\hp_i^r\in \optp'$, and $\hp_{i+1}=\hp_{i}^l$, we obtain that $\{\hp_{i+1}\}\cup \optp'$ is a feasible solution. This proves the third statement of the hypothesis. 
        \item
        Since $A_{i}=A_{i-1}\cup \{\hp_{i}^r\}$, $A_{i-1}\subseteq \optp'$ by inductive hypothesis, and $\hp_i^r\in \optp'$, we obtain $A_i\subseteq \optp'$. Furthermore, since $S(\hp_i)\subseteq S(\hp_i^l)\cup S(\hp_i^r)$, $\hp_i^r\in A_{i}$, and $\hp_{i+1}=\hp_{i}^l$, we have $S(\hp_i)\subseteq S(A_i)\cup S(\hp_{i+1})$.
        This proves the second statement of the hypothesis. 
        \item 
        For any point $\hp_k$ with $1\leq k\leq i-1$, to prove the first statement of the hypothesis, we need to show that $\hp_k\neq \hp_{i+1}$. To this end, since $\hs_i$ is hit by $\hp_{i+1}$, it suffices to show that $\hs_i$ is not hit by $\hp_k$. Indeed, by the inductive hypothesis, $S(\hp_k)\subseteq S(A_k)\cup S(\hp_{k+1})$ and $S(\hp_{k+1}) \subseteq S(A_{k+1})\cup S(\hp_{k+2})$. Hence, $S(\hp_k)\subseteq S(A_k)\cup S(A_{k+1})\cup S(\hp_{k+2})$. As $S(A_k)\subseteq S(A_{k+1})$, we obtain $S(\hp_k)\subseteq S(A_{k+1})\cup S(\hp_{k+2})$. Following the same argument, we can derive $S(\hp_k)\subseteq S(A_{i-1})\cup S(\hp_{i})$. Now that $\hs_i\not\in S(A_{i-1})$ and $\hs_i$ is not hit by $\hp_i$, we obtain that $\hs_i$ is not hit by $\hp_k$. 
\end{itemize}

\item 
If $\hs_i\in S(A_{i-1})$, then $\hs_i$ is hit by a point of $A_{i-1}$, say $p$. As $\hp_i\not\in\hs_i$, $p$ is left or right of $\hp_i$. Without loss of generality, we assume that $p$ is to the right of $\hp_i$. 

We let $\hp_{i+1}$ be $\hp_{i}^l$ and define $A_{i}=A_{i-1}$. We show in the following that the inductive hypothesis holds. 
\begin{itemize}
\item By definition, $\hp_i$ is vertically above $\hs_i$ and $\hp_{i+1}\in \hs_i$. Hence, the fourth statement of the hypothesis holds. 
    \item 
    Since $\hs_i$ is hit by both $p$ and $\hp_i^l$, $\hp_i^l$ is to the left of $\hp_i$, and $p$ is to the right of $\hp_i$, by Observation~\ref{obser:hitprune}, $S(\hp_i)\subseteq S(\hp_i^l)\cup S(p)$.     
    Further, since $\{\hp_i\}\cup \optp'$ is a feasible solution, $p\in A_{i-1}\subseteq \optp'$, and $\hp_{i+1}=\hp_{i}^l$, we obtain that $\{\hp_{i+1}\}\cup \optp'$ is also a feasible solution. This proves the third statement of the hypothesis.  

    \item 
    Since $A_{i-1}\subseteq \optp'$ by the inductive hypothesis and $A_i=A_{i-1}$, we have $A_i\subseteq \optp'$. As discussed above, $S(\hp_i)\subseteq S(\hp_i^l)\cup S(p)$. Since $p\in A_{i-1}=A_i$ and $\hp_{i+1}=\hp_{i}^l$, we obtain $S(\hp_i)\subseteq S(A_i)\cup S(\hp_{i+1})$. This proves the second statement of the hypothesis.  

    \item 
    For any point $\hp_k$ with $1\leq k\leq i-1$, to prove the first statement of the hypothesis, we need to show that $\hp_k\neq \hp_{i+1}$. Depending on whether $x(\hp_i) < x(\hp_k)$, there are two cases (note that $\hp_i\neq \hp_k$ by our hypothesis and thus $x(\hp_i)\neq x(\hp_k)$ due to our general position assumption). 
     \begin{itemize}
         \item If $x(\hp_i) < x(\hp_k)$, then since $\hp_{i+1}=\hp_{i}^l$, we have $x(\hp_{i+1}) < x(\hp_i)<x(\hp_k)$. Hence, $\hp_k\neq \hp_{i+1}$.

         \item If $x(\hp_k) < x(\hp_i)$, then we will show that $\hp_k \notin \hs_i$. This implies that $\hp_k \neq \hp_{i+1}$ as $\hp_{i+1} \in \hs_i$. Since the proof of $\hp_k \notin \hs_i$ is quite technical and lengthy, we devote Section~\ref{sec:LemObserProof} to it. Indeed, this is the most challenging part of the paper. 
     \end{itemize}

     This proves the first statement of the hypothesis.  
\end{itemize}
\end{enumerate}

This proves that the inductive hypothesis still holds for $i+1$. 

\bigskip

According to the inductive hypothesis, each iteration of the process finds a new candidate point $\hp_i$ such that $\optp'\cup \{\hp_i\}$ is a feasible solution. If $\hp_i\not\in Q$, then we can use $\hp_i$ as our target point $p^*$ and we are done with the process. Otherwise, we continue with the next iteration. Since each iteration finds a new candidate point (that was never used before) and $|Q|$ is finite, eventually we will find a candidate point $\hp_i$ that is not in $Q$. 

This completes the proof of the lemma. 
\qed
\end{proof}

\subsection{Proving $\hp_k \notin \hs_i$ when $x(\hp_k) < x(\hp_i)$}
\label{sec:LemObserProof}

In this section, we prove $\hp_k \notin \hs_i$ under the condition that $x(\hp_k) < x(\hp_i)$, where $1\leq k\leq i-1$. In what follows, we assume that $x(\hp_k) < x(\hp_i)$ holds. 
We start with several observations that will be useful later. 

The following observation follows the non-containment property of $S$ (recall that for two disks $s,s'\in S$, $s\prec s'$ means that $s$ is to the left of $s'$). 



   
    

\begin{observation}\label{obser:fifo10}
    Suppose $s$ and $s'$ are two disks of $S$ and there is a point $p$ that is in $s'$ and vertically above $s$. Then we have the following (see Fig.~\ref{fig:prunePoint120}):

    \begin{enumerate}
        \item Suppose $s\prec s'$. Then, for any point $p'$ with $x(p)<x(p')$, if $p'\not\in s'$, then $p'\not\in s$; if $p'\in s$, then $p'\in s'$. 

        \item Suppose $s'\prec s$. Then for any point $p'$ with $x(p')<x(p)$, if $p'\not\in s'$, then $p'\not\in s$; if $p'\in s$, then $p'\in s'$. 
    \end{enumerate}
\end{observation}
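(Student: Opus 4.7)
The two implications inside each case are contrapositives of one another, so it suffices to prove ``$p' \in s$ implies $p' \in s'$''. I plan to handle Case~1 ($s \prec s'$) in detail and obtain Case~2 by a mirror-symmetric argument. The goal reduces to showing that at every $x$-coordinate in the interval $I = [x(p), x(r_s)]$, the upper arc of $s'$ lies above (or on) the upper arc of $s$; once this is established, every point $p' \in s$ with $x(p') > x(p)$ will automatically satisfy $p' \in s'$.

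The key geometric step is to exploit the single-intersection hypothesis on the two upper arcs above $\ell$. At the left endpoint $x(p)$ of $I$, the hypothesis that $p \in s'$ and $p$ is vertically above $s$ places $p$ between the two arcs, so the arc of $s'$ is strictly above the arc of $s$ at $x(p)$. At the right endpoint $x(r_s)$ of $I$, the arc of $s$ touches $\ell$, while the non-containment property combined with $s \prec s'$ gives $x(r_s) < x(r_{s'})$, so the arc of $s'$ is still strictly above $\ell$ at $x(r_s)$; hence again the arc of $s'$ is above the arc of $s$ there. If the two arcs swapped their vertical order anywhere inside $I$, restoring the same order at the right endpoint would force a second crossing above $\ell$, contradicting the single-intersection property. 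Hence the arc of $s'$ dominates the arc of $s$ on all of $I$.

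With this dominance in hand, the conclusion is routine. Given $p' \in s$ with $x(p') > x(p)$, I will note that $x(p') \le x(r_s)$ since $p' \in s$, so $x(p') \in I$. Because $p \in s'$ gives $x(l_{s'}) \le x(p)$ and non-containment gives $x(r_s) \le x(r_{s'})$, we obtain $I \subseteq [x(l_{s'}), x(r_{s'})]$, so $x(p')$ lies within the $x$-range of $s'$. Since $p'$ lies above (or on) $\ell$ and below (or on) the upper arc of $s$, which in turn lies below the upper arc of $s'$ at $x(p')$, the point $p'$ is contained in $s'$. The mirror argument for Case~2 swaps left and right throughout.

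The main technical obstacle I anticipate is making the crossing-exclusion argument airtight at the boundary of $I$, where the arc of $s$ meets $\ell$. I will need to use the general position assumption (so $x(r_s) \ne x(r_{s'})$) to ensure a strict gap between the arcs at $x(r_s)$, and to handle the degenerate sub-case where $p'$ sits exactly on $\ell$ (then $p'$ lies on the lower segment of $s$, and a direct comparison of $x$-coordinates, without any arc reasoning, places it on the lower segment of $s'$). Modulo these boundary checks, the geometric argument is clean and relies only on the non-containment property and the single-intersection hypothesis.
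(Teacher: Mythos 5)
Your proposal is correct and follows essentially the same route as the paper: both rely on the single-intersection and non-containment properties to conclude that the portion of $s$ to the right of $x(p)$ is contained in $s'$, from which both implications follow. The only cosmetic difference is that the paper pins down the unique crossing point $q$ of the two upper arcs (with $x(q)<x(p)$) and argues containment of $s$ to the right of $q$, whereas you establish dominance of the arc of $s'$ over the arc of $s$ on $[x(p),x(r_s)]$ by endpoint comparison plus a no-second-crossing argument, which amounts to the same thing.
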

\begin{proof}
We only prove the case $s\prec s'$ as the proof of the other case is analogous. 

Since $p$ is in $s'$ and vertically above $s$, due to the non-containment property of $S$, the upper arcs of $s$ and $s'$ must intersect at a point, denoted by $q$ (see Fig.~\ref{fig:prunePoint120} left), and further, the area of $s$ to the right of $q$ must be contained inside $s'$. 
As $s\prec s'$ and $p$ is in $s'$ but not $s$, it follows that $x(q)<x(p)$. 

Consider a point $p'$ with $x(p)<x(p')$. As $x(q)<x(p)$, we have $x(q)<x(p')$. Since the area of $s$ to the right of $q$ must be contained within $s'$, if $p'\in s$, then $p'\in s'$; conversely, if $p'\not\in s'$, then $p'\not\in s$. 
\qed
\end{proof}


\begin{figure}[t]
\begin{minipage}[t]{\textwidth}
\begin{center}
\includegraphics[height=1.0in]{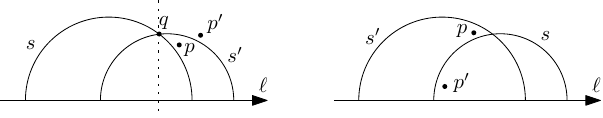}
\caption{\footnotesize Illustrating the Observation~\ref{obser:fifo10}. Left: $s\prec s'$. Right: $s'\prec s$.}
\label{fig:prunePoint120}
\end{center}
\end{minipage}
\vspace{-0.15in}
\end{figure}

\begin{observation}\label{obser:diskSstar}
    There exists a disk $s^*\in S$ that is hit by all points $\hp_j$ with $1\leq j\leq i+1$, but $s^* \notin S(A_i)$.
\end{observation}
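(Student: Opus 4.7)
The plan is to produce $s^*$ from the optimality of $\optp$ and then verify both required properties via a telescoping application of the inductive hypothesis.

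First, I would exploit that $\optp = \optp' \cup \{\hp_1\}$ is an optimal (hence minimum-size) hitting set: the set $\optp'$ alone cannot be a hitting set, since otherwise $\optp$ would not be optimal. Consequently there exists a disk $s^* \in S$ that is missed by every point of $\optp'$. Since $\optp$ hits $s^*$, the point $\hp_1$ must hit $s^*$, i.e.\ $s^* \in S(\hp_1)$. Moreover $A_i \subseteq \optp'$ by the inductive hypothesis, so $s^* \notin S(A_i)$, which is the second of the two required properties.

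Next, I would show $s^* \in S(\hp_j)$ for every $1 \leq j \leq i+1$ by induction on $j$. The base case $j=1$ was just proved. For the inductive step, assume $s^* \in S(\hp_k)$ for some $1 \leq k \leq i$. The inductive hypothesis of Lemma~\ref{lem:hitprune} gives $S(\hp_k) \subseteq S(A_k) \cup S(\hp_{k+1})$ for every $k \leq i-1$, and for $k=i$ the containment $S(\hp_i) \subseteq S(A_i) \cup S(\hp_{i+1})$ has already been established earlier in case~2 of the current iteration (the ``second statement'' bullet). Because $A_1 \subseteq A_2 \subseteq \cdots \subseteq A_i$, we have $S(A_k) \subseteq S(A_i)$, and thus $s^* \notin S(A_i)$ forces $s^* \notin S(A_k)$. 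Hence $s^* \in S(\hp_k) \subseteq S(A_k) \cup S(\hp_{k+1})$ yields $s^* \in S(\hp_{k+1})$, completing the induction and giving $s^* \in S(\hp_j)$ for all $1 \leq j \leq i+1$.

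I do not anticipate a substantial obstacle: the entire argument reduces to the elementary observation that an optimal hitting set cannot contain a removable element, combined with a straightforward telescoping along the containments $S(\hp_k) \subseteq S(A_k) \cup S(\hp_{k+1})$. The only point requiring care is that the $k=i$ instance of the containment is invoked even though we are in the midst of establishing the full inductive hypothesis for step $i+1$; this is legitimate because that particular containment is derived in case~2 strictly before Observation~\ref{obser:diskSstar} is used.
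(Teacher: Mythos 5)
Your proposal is correct and takes essentially the same route as the paper: existence of $s^*$ is forced by the optimality of $\optp$, and membership $s^*\in S(\hp_j)$ for all $j$ is propagated through the containments $S(\hp_k)\subseteq S(A_k)\cup S(\hp_{k+1})$ together with $S(A_k)\subseteq S(A_i)$ and $s^*\notin S(A_i)$, including the legitimate use of the already-established $k=i$ containment. The only (harmless) deviation is that you pick $s^*$ directly as a disk missed by every point of $\optp'$ (since $\optp'$ cannot be feasible), which even gives slightly more than needed, whereas the paper argues by contradiction that $S(\hp_1)\not\subseteq S(A_i)$ and selects $s^*$ from $S(\hp_1)\setminus S(A_i)$.
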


\begin{proof}
Since $A_i=A_{i-1}$ and $S(\hp_i)\subseteq S(A_i)\cup S(\hp_{i+1})$, according to the inductive hypothesis, we have $S(\hp_1) \subseteq S(A_1) \cup S(\hp_2) \subseteq S(A_2) \cup S(\hp_3) \subseteq \cdots \subseteq S(A_i) \cup S(\hp_{i+1})$. As $S(\hp_1) \subseteq S(A_i) \cup S(\hp_{i+1})$, we claim that there must exist a disk $s^*\in S$ that is hit by both $\hp_1$ and $\hp_{i+1}$, but $s^*\not\in S(A_i)$. Indeed, assume to the contrary that this is not true. Then, $S(\hp_1) \subseteq S(A_i)$. Recall that $\{\hp_1\} \cup \optp'$ is a feasible solution and $A_i\subseteq \optp'$. Since $S(\hp_1) \subseteq S(A_i)$, we obtain that $\optp'$ is also a feasible solution. But this incurs contradiction since $\optp$ is an optimal solution and $|\optp'|=|\optp|-1$.

  Now consider any $\hp_j$ with $2 \le j \le i$. Since $S(\hp_1) \subseteq S(A_{j-1}) \cup S(\hp_j)$, $A_{j-1} \subseteq A_i$, $s^*\in S(\hp_1)$, and $s^*\not\in S(A_i)$, we obtain $s^*\in S(\hp_j)$ and thus $\hp_j$ hits $s^*$. The observation thus follows. 
    \qed
\end{proof}

\begin{observation}\label{obser:diskS_star}
      For any $j$ with $1\leq j \leq i$, if $x(\hp_{j+1})<x(\hp_j)$, then $s^*\prec\hs_j$ (see Fig.~\ref{fig:obser4}); otherwise, $\hs_j\prec s^*$.
\end{observation}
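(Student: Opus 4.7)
The plan is to prove Observation~\ref{obser:diskS_star} by contradiction, using a witness point extracted from iteration $j$ of the inductive process in the proof of Lemma~\ref{lem:hitprune}. First I would note that $\hp_j\in s^*$ by Observation~\ref{obser:diskSstar}, and $\hp_j\notin \hs_j$ because $\hp_j$ is vertically above $\hs_j$; hence $s^*\neq \hs_j$. The non-containment property (Observation~\ref{obser:FIFO}) then forces a dichotomy: exactly one of $s^*\prec \hs_j$ or $\hs_j\prec s^*$ holds, and the task becomes matching the correct alternative to the sign of $x(\hp_{j+1})-x(\hp_j)$.

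The key step is to isolate a ``witness'' point $p$ with three properties: (i)~$p$ hits $\hs_j$, (ii)~$p$ lies on the side of $\hp_j$ opposite to $\hp_{j+1}$, and (iii)~$p\in A_i$. Property~(ii) is baked into the algorithm's convention at iteration $j$: $\hp_{j+1}$ is always taken to be the one of $\hp_j^l,\hp_j^r$ on the opposite side of $\hp_j$ from the point chosen to hit $\hs_j$. Property~(iii) follows by inspecting the two inductive sub-cases at iteration $j$: in Case~1 the witness is the point of $\optp'$ hitting $\hs_j$ and is explicitly inserted into $A_j$, so $p\in A_j\subseteq A_i$; in Case~2 the witness already sits in $A_{j-1}\subseteq A_i$. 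Combined with $s^*\notin S(A_i)$ from Observation~\ref{obser:diskSstar}, this delivers $p\notin s^*$.

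The contradiction is then a one-line application of Observation~\ref{obser:fifo10}. In case~(a), when $x(\hp_{j+1})<x(\hp_j)$, property~(ii) places $p$ to the right of $\hp_j$. Assuming for contradiction that $\hs_j\prec s^*$, I would take $s=\hs_j$, $s'=s^*$, and use $\hp_j\in s^*$ vertically above $\hs_j$ as the reference point; Observation~\ref{obser:fifo10}(1) applied to $p'=p$ (which satisfies $x(\hp_j)<x(p)$ and $p\in \hs_j$) yields $p\in s^*$, contradicting $p\notin s^*$. Hence $s^*\prec \hs_j$. In case~(b), when $x(\hp_j)<x(\hp_{j+1})$, the situation is mirror-symmetric: $p$ lies to the left of $\hp_j$, and assuming $s^*\prec \hs_j$ and invoking Observation~\ref{obser:fifo10}(2) produces the same contradiction, so $\hs_j\prec s^*$.

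The main obstacle will be bookkeeping rather than geometry: one has to verify, across every combination of inductive Case~1 vs.\ Case~2 at iteration $j$ and the ``$p$ on the right'' vs.\ ``$p$ on the left'' WLOG branching, that the witness point $p$ really does land in $A_i$ and on the correct side of $\hp_j$. Once this accounting is settled each sub-case collapses to the direct application of Observation~\ref{obser:fifo10} above, and no additional geometric reasoning is required.
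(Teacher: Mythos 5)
Your proposal is correct and takes essentially the same route as the paper: both identify the witness point serving as $\hp_j^r$ (or $\hp_j^l$) — which lies in $A_j\subseteq A_i$ (or in $A_{j-1}\subseteq A_i$ in the second sub-case) and hence misses $s^*$ by Observation~\ref{obser:diskSstar} — and play it against a point of $s^*$ on the opposite side of $\hp_j$ to force the order of $s^*$ and $\hs_j$. The only difference is cosmetic: you use $\hp_j$ (in $s^*$ and vertically above $\hs_j$) as the reference and invoke Observation~\ref{obser:fifo10} explicitly, whereas the paper uses $\hp_{j+1}\in\hs_j\cap s^*$ and appeals to the non-containment property more tersely; your version is, if anything, the more explicit of the two.
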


\begin{figure}[h]
\begin{minipage}[t]{\textwidth}
\begin{center}
\includegraphics[height=0.8in]{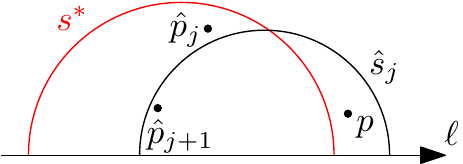}
\caption{\footnotesize Illustrating the Observation~\ref{obser:diskS_star}: If $x(\hp_{j+1}) < x(\hp_j)$, then $s^* \prec \hs_j$.}
\label{fig:obser4}
\end{center}
\end{minipage}
\vspace{-0.15in}
\end{figure}

\begin{proof}
    If $x(\hp_{j+1}) < x(\hp_j)$, then $\hp_{j+1}$ was served as $\hp^l_j$ in our process and there is a point $p \in A_j$ to the right of $\hp_j$ with $p\in \hs_j$ (to serve as $\hp^r_j$; see Fig.~\ref{fig:obser4}). Since $A_j \subseteq A_i$ and $s^*\not\in S(A_i)$ by Observation~\ref{obser:diskSstar}, we have $s^*\not\in S(A_j)$. As $p\in A_j$, $p$ does not hit $s^*$ by Observation~\ref{obser:diskSstar}. Also by Observation~\ref{obser:diskSstar}, $\hp_{j+1}$ hits $s^*$. Since $x(\hp_{j+1})<x(p)$, $\hp_{j+1}\in \hs_j$, $p\in \hs_j$, $\hp_{j+1}\in s^*$, and $p\not\in s^*$, due to the non-containment property, it must hold that $s^*\prec \hs_j$. 

    If $x(\hp_{j}) < x(\hp_{j+1})$, by following a symmetric analysis we can obtain that $\hs_j\prec s^*$. 
    \qed
\end{proof}



\begin{observation}\label{obser:si}
$s^*\prec \hs_i$ holds.
\end{observation}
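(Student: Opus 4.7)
The plan is to derive this as a direct corollary of Observation~\ref{obser:diskS_star} applied with the index $j = i$. Recall that we are currently inside Case~2 of the $i$-th iteration, i.e., the case $\hs_i \in S(A_{i-1})$, and in that case the algorithm explicitly sets $\hp_{i+1} = \hp_i^l$. By the definition of $\hp_i^l$, this point lies to the left of $\hp_i$, so $x(\hp_{i+1}) < x(\hp_i)$. Hence the hypothesis of Observation~\ref{obser:diskS_star} for $j = i$ is satisfied, and the conclusion $s^* \prec \hs_i$ follows immediately.

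Before invoking the observation, I would briefly verify that $\hs_i$ indeed plays the role of $\hs_j$ needed in its proof, to make sure there is no off-by-one issue: the inductive hypothesis as strengthened at the end of iteration $i$ now includes $\hs_i$ (not just $\hs_1,\ldots,\hs_{i-1}$), with $\hp_i$ vertically above $\hs_i$ and $\hp_{i+1} \in \hs_i$. Similarly, $A_i = A_{i-1}$ was just defined in Case~2, so $A_i \subseteq \optp'$ and $s^* \notin S(A_i)$ is available from Observation~\ref{obser:diskSstar}. In particular, the witness point $p \in A_{i-1} = A_i$ that hits $\hs_i$ (and was assumed to lie to the right of $\hp_i$) cannot hit $s^*$, while $\hp_{i+1}$ hits $s^*$ by Observation~\ref{obser:diskSstar}; this is exactly the configuration Observation~\ref{obser:diskS_star} uses to force $s^* \prec \hs_i$ via the non-containment property.

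Because the result is obtained by a single application of an already-proved observation, there is no serious obstacle. If the reviewer prefers a self-contained paragraph rather than a citation, I would simply inline the short argument: $\hp_{i+1}$ and $p$ both lie in $\hs_i$, with $\hp_{i+1}$ to the left of $p$; $\hp_{i+1} \in s^*$ while $p \notin s^*$; so the two upper arcs of $\hs_i$ and $s^*$ must cross above $\ell$, and by the non-containment property this is only possible when $s^* \prec \hs_i$. Either way, the statement is a one-line consequence and does not require new machinery.
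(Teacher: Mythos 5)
Your proposal is correct and matches the paper's proof exactly: the paper also derives $s^*\prec \hs_i$ in one line by noting that $\hp_{i+1}=\hp_i^l$ gives $x(\hp_{i+1})<x(\hp_i)$ and then invoking Observation~\ref{obser:diskS_star} with $j=i$. Your extra check that the $j=i$ case and the witness point $p\in A_{i-1}=A_i$ fit the hypotheses of that observation is a harmless elaboration of the same argument.
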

\begin{proof}
Recall that $\hp_{i+1}=\hp_i^l$ and thus $x(\hp_{i+1})<x(\hp_i)$. By Observation~\ref{obser:diskS_star}, we have $s^*\prec \hs_i$.\qed
\end{proof}

With the above observations, we proceed to prove that $\hp_k \notin \hs_i$, where $1\leq k\leq i-1$. Recall that $x(\hp_k) < x(\hp_i)$. We will first prove the case $k=i-1$ and then prove the case $1\leq k\leq i-2$, because the proof of the latter case replies on the first case. 

\paragraph{\bf The case $\boldsymbol{k=i-1}$.} Since $x(\hp_{i-1}) < x(\hp_i)$, by Observation~\ref{obser:diskS_star}, $\hs_{i-1}\prec s^*$. Further, as $s^*\prec \hs_i$ by Observation~\ref{obser:si}, we obtain $\hs_{i-1}\prec \hs_i$.
Our goal is to prove $\hp_{i-1} \notin \hs_i$. Assume to the contrary that $\hp_{i-1} \in \hs_i$ (see Fig.~\ref{fig:prunePoint10}). Then, since $\hp_{i-1}$ is in $\hs_i$ and vertically above $\hs_{i-1}$, $\hs_{i-1}\prec \hs_i$, 
$x(\hp_{i-1}) < x(\hp_i)$, and $\hp_{i}\not\in \hs_i$, we obtain $\hp_{i} \notin \hs_{i-1}$ by Observation~\ref{obser:fifo10}. This incurs contradiction since $\hp_{i} \in \hs_{i-1}$ by the definition of $\hp_i$ (i.e., the inductive hypothesis in the proof of Lemma~\ref{lem:hitprune}). 

\begin{figure}[h]
\begin{minipage}[t]{\textwidth}
\begin{center}
\includegraphics[height=0.9in]{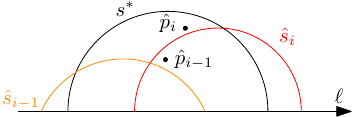}
\caption{\footnotesize Illustrating the case $k=i-1$.}
\label{fig:prunePoint10}
\end{center}
\end{minipage}
\vspace{-0.15in}
\end{figure}

\paragraph{\bf The case $\boldsymbol{1\leq k\leq i-2}$.} Our goal is to prove $\hp_k \notin \hs_i$. Recall that $x(\hp_k) < x(\hp_i)$. Assume to the contrary that $\hp_k \in \hs_i$. We have the following lemma.

    \begin{lemma}
    \label{lemma:lmPruningPoints}
    Suppose $\hp_{k}\in \hs_i$ and $x(\hp_{k}) < x(\hp_i)$ with $1\leq k\leq i-2$; then we have the following for any $j \in [k+1,i-1]$ (see Fig.~\ref{fig:prunePoint2}):
    \begin{enumerate}
            \item  If $x(\hp_{k})< x(\hp_j)$, then $\hp_j \in \hs_i$ and there exists a disk $s^1_j\in S$ such that $s^1_j\prec s^*$, $\hp_j \in s^1_j$, $\hp_{k} \notin s^1_j$, and $\hp_i \notin s^1_j$.
    
            \item  If $x(\hp_j) < x(\hp_{k})$, then there exists a disk $s^2_j\in S$ such that $s^*\prec s^2_j$, $\hp_j \in s^2_j$, and $\hp_{k} \notin s^2_j$; further, $\hp_i \notin s^2_j$ if $s^2_j\prec \hs_i$. 
    \end{enumerate}

\end{lemma}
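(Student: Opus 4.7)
The plan is to prove Lemma~\ref{lemma:lmPruningPoints} by descending induction on $j$, handling the two items simultaneously because the witness disk at index $j$ may be inherited from that at index $j+1$. The primary tool will be Observation~\ref{obser:fifo10}, which transfers containment and non-containment information between two disks via a point that is vertically above one and inside the other. Observations~\ref{obser:diskSstar}, \ref{obser:diskS_star}, and \ref{obser:si} will repeatedly be used to pin down the side of $s^*$ on which each $\hs_t$ lies, according to whether $\hp_{t+1}$ is to the left or right of $\hp_t$ in the process of Lemma~\ref{lem:hitprune}, as well as to supply the key fact $s^* \prec \hs_i$.

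The base case $j = i-1$ will be argued in the spirit of the $k = i-1$ analysis just completed. For item~1, where $x(\hp_k) < x(\hp_{i-1})$, I would first derive $\hp_{i-1} \in \hs_i$: assuming $\hp_{i-1} \notin \hs_i$, combining this with $\hp_k \in \hs_i$, $x(\hp_k) < x(\hp_{i-1})$, and the fact that $\hp_i \in \hs_{i-1}$ is vertically above $\hs_{i-1}$, Observation~\ref{obser:fifo10} would force $\hp_k \notin \hs_i$, a contradiction. The natural choice of witness $s^1_{i-1}$ is $\hs_{i-1}$ whenever $\hs_{i-1} \prec s^*$ (equivalently $x(\hp_{i-1}) < x(\hp_i)$ by Observation~\ref{obser:diskS_star}); otherwise I would propagate leftward from $\hs_{i-1}$ to an earlier disk containing $\hp_{i-1}$, again via Observation~\ref{obser:fifo10}. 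The exclusions $\hp_k \notin s^1_{i-1}$ and $\hp_i \notin s^1_{i-1}$ then follow from the $x$-coordinate ordering together with the same transfer observation. Item~2 is handled symmetrically with left/right swapped.

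For the inductive step, fix $j \in [k+1, i-2]$ and assume the lemma holds for indices $j+1, \ldots, i-1$. I would split on the side of $\hp_j$ on which $\hp_{j+1}$ falls, which by Observation~\ref{obser:diskS_star} fixes the side of $s^*$ occupied by $\hs_j$, and further split on whether $x(\hp_k) < x(\hp_j)$ or $x(\hp_j) < x(\hp_k)$. In most sub-configurations the natural witness for $\hp_j$ is $\hs_j$ itself: since $\hp_{j+1} \in \hs_j$ while $\hp_j$ is vertically above $\hs_j$, Observation~\ref{obser:fifo10} lets us transfer the inductively known exclusions from $s^1_{j+1}$ or $s^2_{j+1}$ across to $\hs_j$. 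When $\hs_j$ happens to lie on the wrong side of $s^*$ for the desired conclusion, the inductive witness at $j+1$ itself is propagated downward to serve as $s^1_j$ or $s^2_j$, after verifying that $\hp_j$ lies inside it via the same transfer argument.

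The main obstacle is the combinatorial explosion of configurations: two items of the lemma, two possibilities for the side on which $\hp_{j+1}$ falls relative to $\hp_j$, and two possibilities for $\hp_k$ relative to $\hp_{j+1}$, each requiring a separate verification of all four properties of the candidate witness (placement relative to $s^*$, membership of $\hp_j$, exclusion of $\hp_k$, and exclusion of $\hp_i$). The most delicate verification will be $\hp_i \notin s^1_j$, and the conditional $\hp_i \notin s^2_j$ when $s^2_j \prec \hs_i$, for $j$ significantly smaller than $i$: it must thread the contradiction hypothesis $\hp_k \in \hs_i$ through the intermediate disks $\hs_j, \hs_{j+1}, \ldots, \hs_{i-1}$ via repeated uses of Observation~\ref{obser:fifo10}, and the bookkeeping of which of these disks lies left of $s^*$ and which lies right is where the technical length of the argument will reside.
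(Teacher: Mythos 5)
Your toolkit is the right one (Observations~\ref{obser:fifo10}, \ref{obser:diskSstar}, \ref{obser:diskS_star}, \ref{obser:si} and a case analysis on left/right positions), but the plan has a genuine structural flaw: the induction runs in the wrong direction, and the witnesses you name cannot be witnesses at all. The lemma requires $\hp_j\in s^1_j$ (resp.\ $\hp_j\in s^2_j$), yet $\hs_j$ is by construction vertically below $\hp_j$, so $\hp_j\notin\hs_j$; hence $\hs_{i-1}$ can never serve as $s^1_{i-1}$ in your base case, and $\hs_j$ can never be ``the natural witness for $\hp_j$'' in your inductive step. The disk $\hs_j$ is only usable as a witness one index later, since it contains $\hp_{j+1}$. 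This off-by-one is not cosmetic: it is exactly why the correct induction must run forward. The paper starts at $j=k+1$ with witness $\hs_k$, because the critical exclusion $\hp_k\notin\hs_k$ holds there by definition ($\hp_k$ is vertically above $\hs_k$), and then both the exclusions ($\hp_k$, and conditionally $\hp_i$) and the membership $\hp_j\in\hs_i$ are pushed forward along the chain $\hs_k,\hs_{k+1},\dots$ via Observation~\ref{obser:fifo10}, the witness at $j+1$ being either the witness at $j$ or $\hs_j$.

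Concretely, your descending base case at $j=i-1$ cannot be established from the local data you invoke. The only disk known to contain $\hp_{i-1}$ is $\hs_{i-2}$, and nothing available at that end of the chain certifies $\hp_k\notin\hs_{i-2}$, $\hp_i\notin\hs_{i-2}$, or even the position of $\hs_{i-2}$ relative to $s^*$ (Observation~\ref{obser:diskS_star} for $\hs_{i-2}$ needs the order of $\hp_{i-2}$ and $\hp_{i-1}$, and the exclusion of $\hp_k$ is precisely what the forward chain from $\hs_k$ supplies). Your derivation of $\hp_{i-1}\in\hs_i$ also does not go through: Observation~\ref{obser:fifo10} transfers membership between two disks for a fixed point, not between two points for a fixed disk, and with the facts at hand ($\hp_i\in\hs_{i-1}$, $\hp_i$ vertically above $\hs_i$, and $\hs_{i-1}\prec\hs_i$ when $x(\hp_{i-1})<x(\hp_i)$) the observation yields $\hp_k\in\hs_{i-1}$ and $\hp_{i-1}\notin\hs_{i-1}\Rightarrow\hp_{i-1}\notin\hs_i$ --- it does not ``force $\hp_k\notin\hs_i$.'' In the paper, the statement $\hp_j\in\hs_i$ of item~1 is obtained by propagating the hypothesis $\hp_k\in\hs_i$ forward through the intermediate disks, not from $\hs_{i-1}$ and $\hs_i$ alone. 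As sketched, your backward induction cannot be completed without, in effect, re-running the forward induction from $j=k+1$.
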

     \begin{figure}[h]
    \begin{minipage}[t]{\textwidth}
    \begin{center}
    \includegraphics[height=0.8in]{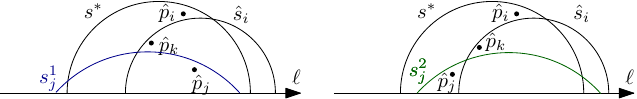}
    \caption{\footnotesize Illustrating Lemma~\ref{lemma:lmPruningPoints}. Left: $x(\hp_{k})< x(\hp_j)$. Right: $x(\hp_j) < x(\hp_{k})$.}
    \label{fig:prunePoint2}
    \end{center}
    \end{minipage}
    \vspace{-0.15in}
    \end{figure}


Before proving Lemma~\ref{lemma:lmPruningPoints}, we first use it to obtain a contradiction (and thus our assumption $\hp_k \in \hs_i$ is false). Depending on whether $x(\hp_k) < x(\hp_{i-1})$ or $x(\hp_{i-1}) < x(\hp_k)$, there are two cases. 
\begin{enumerate}
    \item If $x(\hp_k) < x(\hp_{i-1})$, then 
     by Lemma~\ref{lemma:lmPruningPoints} (with $j=i-1$), $\hp_{i-1} \in \hs_i$ and there is disk $s^1_{i-1}\in S$ such that $s^1_{i-1}\prec s^*$, $\hp_{i-1} \in s^1_{i-1}$, $\hp_k \notin s^1_{i-1}$, and $\hp_i \notin s^1_{i-1}$. Depending on whether $x(\hp_i) < x(\hp_{i-1})$ or $x(\hp_{i-1}) < x(\hp_i)$, there are two further subcases. 
    \begin{enumerate}
        \item If $x(\hp_i) < x(\hp_{i-1})$, then by Observation~\ref{obser:diskS_star}, we have $s^*\prec \hs_{i-1}$, and thus $s^1_{i-1}\prec \hs_{i-1}$ since $s^1_{i-1}\prec s^*$. Because $\hp_{i-1}$ is in $s^1_{i-1}$ and vertically above $\hs_{i-1}$, $s^1_{i-1}\prec \hs_{i-1}$, $x(\hp_i) < x(\hp_{i-1})$ and $\hp_i \notin s^1_{i-1}$, we have $\hp_i \notin \hs_{i-1}$ by Observation~\ref{obser:fifo10} (see Fig.~\ref{fig:prunePoint3}).
        But this incurs contradiction since $\hp_i \in \hs_{i-1}$ by the definition of $\hp_i$. 

    \begin{figure}[h]
    \begin{minipage}[t]{0.49\textwidth}
    \begin{center}
    \includegraphics[height=0.8in]{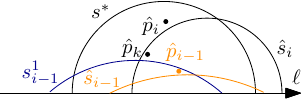}
    \caption{\footnotesize Illustrating the sub-case $x(\hp_k) < x(\hp_{i-1})$ and $x(\hp_i) < x(\hp_{i-1})$.}
    \label{fig:prunePoint3}
    \end{center}
    \end{minipage}
    \hspace{0.05in}
    \begin{minipage}[t]{0.49\textwidth}
    \begin{center}
    \includegraphics[height=0.8in]{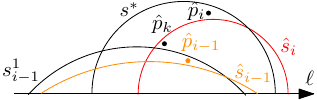}
    \caption{\footnotesize Illustrating the sub-case $x(\hp_k) < x(\hp_{i-1}) < x(\hp_i)$.}   
    \label{fig:prunePoint6}
    \end{center}
    \end{minipage}
    \vspace{-0.15in}
    \end{figure}

        \item If $x(\hp_{i-1}) < x(\hp_i)$, then by Observation~\ref{obser:diskS_star}, we have $\hs_{i-1}\prec s^*$. Since $s^*\prec \hs_i$ by Observation~\ref{obser:si}, we further have $\hs_{i-1}\prec \hs_i$. Since $\hp_{i-1}$ is in $\hs_i$ and vertically above $\hs_{i-1}$, $\hs_{i-1}\prec \hs_i$, $x(\hp_{i-1}) < x(\hp_i)$, and $\hp_i \notin \hs_i$,  we have $\hp_{i} \notin \hs_{i-1}$ by Observation~\ref{obser:fifo10} (see Fig.~\ref{fig:prunePoint6}). But this incurs contradiction since $\hp_i \in \hs_{i-1}$ by the definition of $\hp_i$.               
    \end{enumerate}


    \item 
    If $x(\hp_{i-1}) < x(\hp_k)$, 
    then by Lemma~\ref{lemma:lmPruningPoints} (with $j=i-1$), there is a disk $s^2_{i-1}\in S$ such that $s^*\prec s^2_{i-1}$, $\hp_{i-1} \in s^2_{i-1}$, and $\hp_k \notin s^2_{i-1}$; further, $\hp_i \notin s^2_{i-1}$ if $s^2_{i-1}\prec \hs_i$. 
    
    Recall that $x(\hp_k) < x(\hp_i)$. As $x(\hp_{i-1}) < x(\hp_k)$, we have $x(\hp_{i-1}) < x(\hp_i)$. By Observation~\ref{obser:diskS_star}, $\hs_{i-1}\prec s^*$, and thus $\hs_{i-1}\prec s^2_{i-1}$ since $s^*\prec s^2_{i-1}$. Also, since $x(\hp_{i-1}) < x(\hp_i)$, we already proved above (in the case $k=i-1$) that $\hp_{i-1} \notin \hs_i$ must hold.  

    Note that since $\hp_{i-1}\not\in \hs_{i}$ while $\hp_{i-1}\in s^2_{i-1}$, we have $\hs_{i}\neq s^2_{i-1}$. 
    Depending on whether $s^2_{i-1}\prec \hs_i$ or $\hs_i\prec s^2_{i-1}$, there are two sub-cases. 
    \begin{enumerate}
        \item If $s^2_{i-1}\prec \hs_i$, then $\hp_i \notin s^2_{i-1}$ (by Lemma~\ref{lemma:lmPruningPoints} as discussed above). 
        Since $\hp_{i-1}$ is in $s^2_{i-1}$ and vertically above $\hs_{i-1}$, $\hs_{i-1}\prec s^2_{i-1}$, $x(\hp_{i-1}) < x(\hp_i)$, and $\hp_i \notin s^2_{i-1}$, we have $\hp_{i} \notin \hs_{i-1}$ by Observation~\ref{obser:fifo10} (see Fig.~\ref{fig:prunePoint4}).
        But this incurs contradiction since $\hp_i \in \hs_{i-1}$ by the definition of $\hp_i$. 

        \item Suppose $\hs_i\prec s^2_{i-1}$. We claim that $\hp_{i-1}$ must be vertically above $\hs_i$. Indeed, since $\hp_{i-1} \notin \hs_i$, it suffices to show that $\hp_{i-1}$ is to the right (resp., left) of the left (resp., right) endpoint of the lower segment of $\hs_i$. On the one hand, since $x(\hp_{i-1}) < x(\hp_i)$ and $\hp_i$ is vertically above $\hs_i$, we know that $\hp_{i-1}$ is to the left of the right endpoint of the lower segment of $\hs_i$. On the other hand, since $\hp_{i-1}\in s^2_{i-1}$ (by Lemma~\ref{lemma:lmPruningPoints} as discussed above) and $\hs_i\prec s^2_{i-1}$, we obtain that $\hp_{i-1}$ is to the right of the left endpoint of the lower segment of $\hs_i$. As such, $\hp_{i-1}$ must be vertically above $\hs_i$.

        Since $\hp_{i-1}$ is in $s^2_{i-1}$ and vertically above $\hs_i$, $\hs_i\prec s^2_{i-1}$, $x(\hp_{i-1}) < x(\hp_k)$, and $\hp_k \in \hs_i$, we have $\hp_k \in s^2_{i-1}$ by Observation~\ref{obser:fifo10} (see Fig.~\ref{fig:prunePoint7}). 
        But this incurs contradiction since $\hp_k \notin s^2_{i-1}$ (by Lemma~\ref{lemma:lmPruningPoints} as discussed above). 
        
    \end{enumerate}
    \begin{figure}[H]
    \begin{minipage}[t]{0.49\textwidth}
    \begin{center}
    \includegraphics[height=0.8in]{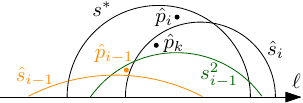}
    \caption{\footnotesize Illustrating the sub-case $x(\hp_{i-1}) < x(\hp_k)$ and $s^2_{i-1}\prec \hs_i$.}
    \label{fig:prunePoint4}
    \end{center}
    \end{minipage}
    \hspace{0.05in}
    \begin{minipage}[t]{0.49\textwidth}
    \begin{center}
    \includegraphics[height=1.0in]{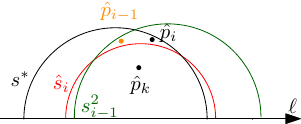}
    \caption{\footnotesize Illustrating the sub-case $x(\hp_{i-1}) < x(\hp_k)$ and $\hs_i\prec s^2_{i-1}$.}   
    \label{fig:prunePoint7}
    \end{center}
    \end{minipage}
    \vspace{-0.15in}
    \end{figure}
    
\end{enumerate}

In summary, the above obtains contradiction and thus our assumption $\hp_k\in \hs_i$ is false. This proves that $\hp_k\not\in \hs_i$. It remains to prove Lemma~\ref{lemma:lmPruningPoints}, which is done in Section~\ref{sec:lmPruningPoints}. 

\subsection{Proof of Lemma~\ref{lemma:lmPruningPoints}} 
\label{sec:lmPruningPoints}
We now prove Lemma~\ref{lemma:lmPruningPoints}. Recall that $\hp_{k}\in \hs_i$ and $x(\hp_{k}) < x(\hp_i)$ from the condition of the lemma. 
We use induction.

\paragraph{\bf Base case: $\boldsymbol{j = k+1}$.} Depending whether $x(\hp_k) < (\hp_{k+1})$ or $x(\hp_{k+1}) < x(\hp_k)$, there are two cases. 

\begin{itemize}
    \item If $x(\hp_k) < (\hp_{k+1})$, we show that $\hs_k$ can serve as $s^1_{k+1}$ in the lemma statement. To this end, below we prove that (1) $\hp_{k+1} \in \hs_i$, (2) $\hs_k \prec s^*$, (3) $\hp_{k+1} \in \hs_k$, (4) $\hp_{k} \notin \hs_k$, and (5) $\hp_i \notin \hs_k$. 

    First of all, since $x(\hp_k) < (\hp_{k+1})$, (2) $\hs_k\prec s^*$ holds by Observation~\ref{obser:diskS_star}, and thus $\hs_k\prec \hs_i$ since $s^*\prec \hs_i$ by Observation~\ref{obser:si}. Because $\hp_k$ is in $\hs_i$ and vertically above $\hs_k$, $\hs_k\prec \hs_i$, $x(\hp_k) <x(\hp_i)$, and $\hp_i \notin \hs_i$, we have (5) $\hp_i \notin \hs_k$ by Observation~\ref{obser:fifo10} (see Fig.~\ref{fig:prunePoint13}).
    On the other hand, since $x(\hp_k) <x(\hp_{k+1})$ and $\hp_{k+1} \in \hs_k$, we have (1) $\hp_{k+1} \in \hs_i$ by Observation~\ref{obser:fifo10}. In addition, (3) $\hp_{k+1} \in \hs_k$ and (4) $\hp_{k} \notin \hs_k$ follow directly from their definitions. 

    \item If $x(\hp_{k+1}) < x(\hp_k)$, we show that $\hs_k$ can serve as $s^2_{k+1}$ in the lemma statement. To this end, below we prove that (1) $s^*\prec \hs_k$, (2) $\hp_{k+1} \in \hs_k$, (3) $\hp_{k} \notin \hs_k$; and (4) $\hp_i \notin \hs_k$ if $\hs_k\prec \hs_i$. 

    Since $x(\hp_{k+1}) < x(\hp_k)$, we have (1) $s^*\prec \hs_k$ by Observation~\ref{obser:diskS_star}. (2) $\hp_{k+1} \in \hs_k$ and (3) $\hp_{k} \notin \hs_k$ follow directly from their definitions. 
    

    To prove (4), suppose $\hs_k\prec \hs_i$. Because $\hp_k$ is in $\hs_i$ and vertically above $\hs_k$, $\hs_k\prec \hs_i$, $x(\hp_k) <x(\hp_i)$, and $\hp_i \notin \hs_i$, we have (4) $\hp_i \notin \hs_k$ by Observation~\ref{obser:fifo10} (see Fig.\ref{fig:prunePoint14}). 
\end{itemize}

    \begin{figure}[H]
    \begin{minipage}[t]{0.49\textwidth}
    \begin{center}
    \includegraphics[height=0.8in]{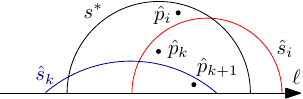}
    \caption{\footnotesize Illustrating the case $x(\hp_k) < (\hp_{k+1})$.}
    \label{fig:prunePoint13}
    \end{center}
    \end{minipage}
    \hspace{0.05in}
    \begin{minipage}[t]{0.49\textwidth}
    \begin{center}
    \includegraphics[height=0.8in]{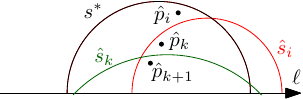}
    \caption{\footnotesize Illustrating the case $x(\hp_{k+1}) < x(\hp_k)$.}   
    \label{fig:prunePoint14}
    \end{center}
    \end{minipage}
    \vspace{-0.15in}
    \end{figure}

\paragraph{\bf Inductive step.} Assume for the inductive purpose that the lemma statement holds for $j\in [k+1,i-2]$, that is, 
    \begin{enumerate}
            \item  If $x(\hp_{k})< x(\hp_j)$, then $\hp_j \in \hs_i$ and there exists a disk $s^1_j\in S$ such that $s^1_j\prec s^*$, $\hp_j \in s^1_j$, $\hp_{k} \notin s^1_j$, and $\hp_i \notin s^1_j$.
    
            \item  If $x(\hp_j) < x(\hp_{k})$, then there exists a disk $s^2_j\in S$ such that $s^*\prec s^2_j$, $\hp_j \in s^2_j$, and $\hp_{k} \notin s^2_j$; furthermore, $\hp_i \notin s^2_j$ if $s^2_j\prec \hs_i$. 
    \end{enumerate}
    
In what follows, we show that the lemma statement also holds for $j+1$, i.e., 
   \begin{enumerate}
            \item  If $x(\hp_{k})< x(\hp_{j+1})$, then $\hp_{j+1} \in \hs_i$ and there exists a disk $s^1_{j+1}\in S$ such that $s^1_{j+1}\prec s^*$, $\hp_{j+1} \in s^1_{j+1}$, $\hp_{k} \notin s^1_{j+1}$, and $\hp_i \notin s^1_{j+1}$.
    
            \item  If $x(\hp_{j+1}) < x(\hp_{k})$, then there exists a disk $s^2_{j+1}\in S$ such that $s^*\prec s^2_{j+1}$, $\hp_{j+1} \in s^2_{j+1}$, and $\hp_{k} \notin s^2_{j+1}$; further, $\hp_i \notin s^2_{j+1}$ if $s^2_{j+1}\prec \hs_i$. 
    \end{enumerate}

Depending on whether $x(\hp_{k})< x(\hp_j)$ or $x(\hp_j) < x(\hp_{k})$, there are two cases. 

\paragraph{{\bf The case $\boldsymbol{x(\hp_k)<x(\hp_j)}$.}} In this case, according to the inductive hypothesis, $\hp_j \in \hs_i$ and there exists a disk $s^1_j\in S$ such that $s^1_j\prec s^*$, $\hp_j \in s^1_j$, $\hp_{k} \notin s^1_j$, and $\hp_i \notin s^1_j$. 
Since $s^*\prec \hs_i$ by Observation~\ref{obser:si}, we have $s^1_j\prec \hs_i$.  
Depending on whether $x(\hp_j) < x(\hp_{j+1})$ or $x(\hp_{j+1}) < x(\hp_j)$, there are two sub-cases.            
\begin{enumerate}
\item The sub-case $x(\hp_j) < x(\hp_{j+1})$.  Since $x(\hp_k)<x(\hp_j)$, we have $x(\hp_k) < x(\hp_{j+1})$. Hence, our goal is to prove $\hp_{j+1} \in \hs_i$ and find a disk to serve as $s_{j+1}^1$. We discuss the two situations depending on whether $\hp_{j+1} \in s^1_j$.
            
\begin{enumerate}        
\item \label{item:case10}
If $\hp_{j+1} \in s^1_j$, we show that $s^1_j$ can serve as $s^1_{j+1}$. To this end, below we prove that 
(1) $\hp_{j+1} \in \hs_i$, (2) $s^1_{j}\prec s^*$, (3) $\hp_{j+1} \in s^1_{j}$, (4) $\hp_{k} \notin s^1_{j}$, and (5) $\hp_i \notin s^1_{j}$.
            
First of all, (3) $\hp_{j+1} \in s^1_j$ is vacuously true. Also,   
(2) $s^1_{j}\prec s^*$, (4) $\hp_{k} \notin s^1_{j}$, and (5) $\hp_i \notin s^1_{j}$ follows directly from the inductive hypothesis as discussed above.          

                It remains to prove (1) $\hp_{j+1} \in \hs_i$.
                We claim that $\hp_k$ is vertically above $s^1_j$  (see Fig.~\ref{fig:prunePoint15}). Indeed, since $\hp_k\not\in s^1_j$, it suffices to show that $\hp_k$ is to the left (resp., right) of the right (resp., left) endpoint of the lower segment of $s^1_j$. On the one hand, since $\hp_k\in \hs_i$ and $s^1_j\prec \hs_i$, $\hp_k$ must be to the right of the left endpoint of the lower segment of $s^1_j$. On the other hand, since $x(\hp_k)<x(\hp_j)$ and $\hp_j\in s^1_j$, we can obtain that $\hp_k$ is to the left of the right endpoint of the lower segment of $s^1_j$. As such, $\hp_k$ must be vertically above $s^1_j$.
                

                Because $\hp_k$ is in $\hs_i$ and vertically above $s^1_j$, $s^1_j\prec \hs_i$, $x(\hp_k)<x(\hp_{j+1})$, and $\hp_{j+1} \in s^1_j$, we have (1) $\hp_{j+1} \in \hs_i$ by Observation~\ref{obser:fifo10}.

    \begin{figure}[H]
    \begin{minipage}[t]{0.49\textwidth}
    \begin{center}
    \includegraphics[height=1.0in]{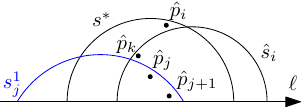}
    \caption{\footnotesize Illustrating the case $x(\hp_k)<x(\hp_j)<x(\hp_{j+1})$ and $\hp_{j+1} \in s^1_j$.}
    \label{fig:prunePoint15}
    \end{center}
    \end{minipage}
    \hspace{0.05in}
    \begin{minipage}[t]{0.49\textwidth}
    \begin{center}
    \includegraphics[height=1.0in]{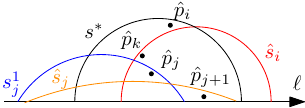}
    \caption{\footnotesize Illustrating the case  $x(\hp_k)<x(\hp_j)<x(\hp_{j+1})$ and $\hp_{j+1} \not\in s^1_j$.}   
    \label{fig:prunePoint16}
    \end{center}
    \end{minipage}
    \vspace{-0.15in}
    \end{figure}
                
                \item  If $\hp_{j+1} \notin s^1_j$, we show that $\hs_j$ can serve as $s^1_{j+1}$. To this end, below we prove that (1) $\hp_{j+1} \in \hs_i$, (2) $\hs_j\prec s^*$, (3) $\hp_{j+1} \in \hs_j$, (4) $\hp_{k} \notin \hs_j$, and (5) $\hp_i \notin \hs_j$.

                First of all, (3) $\hp_{j+1} \in \hs_j$ holds by definition. Since $x(\hp_j) < x(\hp_{j+1})$, we have (2) $\hs_j\prec s^*$ by Observation~\ref{obser:diskS_star}. Further, as $s^*\prec \hs_i$ by Observation~\ref{obser:si}, we have $\hs_j\prec \hs_i$.
               

                Note that since $\hp_{j+1}$ is in $\hs_j$ but not in $s_j^1$, $\hs_j\neq s_j^1$. 
                We claim $s^1_j\prec \hs_j$. Indeed, assume to the contrary that $\hs_j\prec s^1_j$. 
                Then, because $\hp_j$ is in $s^1_j$ and vertically above $\hs_j$, $x(\hp_j) < x(\hp_{j+1})$, and $\hp_{j+1} \notin s^1_j$, we have $\hp_{j+1} \notin \hs_j$ by 
                Observation~\ref{obser:fifo10}. But this incurs contradiction since $\hp_{j+1} \in \hs_j$ by definition. 
                As such, $s^1_j\prec \hs_j$ holds. 

                Because $\hp_j$ is in $s^1_j$ and vertically above $\hs_j$, $s^1_j\prec \hs_j$, $x(\hp_k) < x(\hp_j)$ and $\hp_k \notin s^1_j$, we have (4) $\hp_k \notin \hs_j$ by Observation~\ref{obser:fifo10} (see Fig.~\ref{fig:prunePoint16}). 

                

                Because $\hp_j$ is in $\hs_i$ and vertically above $\hs_j$, $\hs_j\prec\hs_i$, $x(\hp_j) <x(\hp_{j+1})$, and $\hp_{j+1} \in \hs_j$, we have (1) $\hp_{j+1} \in \hs_i$ by Observation~\ref{obser:fifo10} (see Fig.~\ref{fig:prunePoint16}).

                We claim that $\hp_k$ is vertically above $\hs_j$. Indeed, since $\hp_k\not\in \hs_j$ (as already proved above), it suffices to show that $\hp_k$ is to the left (resp., right) of the right (resp., left) endpoint of the lower segment of $\hs_j$. On the one hand, since $\hp_k\in \hs_i$ and $\hs_j\prec \hs_i$, $\hp_k$ must be to the right of the left endpoint of the lower segment of $\hs_j$. On the other hand, since $x(\hp_k)<x(\hp_j)$ and $\hp_j$ is vertically above $\hs_j$, we can obtain that $\hp_k$ is to the left of the right endpoint of the lower segment of $\hs_j$. As such, $\hp_k$ must be vertically above $\hs_j$.

                Because $\hp_k$ is in $\hs_i$ and vertically above $\hs_j$, $\hs_j\prec\hs_i$, $x(\hp_k) <x(\hp_i)$, and $\hp_i \notin \hs_i$, we have (5) $\hp_i \notin \hs_j$ by Observation~\ref{obser:fifo10} (see Fig.~\ref{fig:prunePoint16}). 
            \end{enumerate}

    \item The sub-case $x(\hp_{j+1}) < x(\hp_j)$. Since $x(\hp_{j+1}) < x(\hp_j)$, we have $s^*\prec \hs_j$ by Observation~\ref{obser:diskS_star}. Since $s_j^1\prec s^*$ (by the inductive hypothesis), we obtain $s_j^1\prec \hs_j$. 

    We first prove that $\hp_{j+1} \in s^1_j$. Indeed, assume to the contrary that $\hp_{j+1}\notin s^1_j$ (see Fig.~\ref{fig:prunePoint18}). Then, because $\hp_j \in s^1_j$ and $\hp_j$ is vertically above $\hs_j$, $s_j^1\prec \hs_j$, $x(\hp_{j+1}) < x(\hp_j)$, and $\hp_{j+1} \notin s^1_j$, we have $\hp_{j+1} \notin \hs_j$ by  Observation~\ref{obser:fifo10}. But this incurs contradiction since $\hp_{j+1} \in \hs_j$ by the definition of $\hp_{j+1}$. As such, $\hp_{j+1} \in s^1_j$ holds.
    

    \begin{figure}[H]
    \begin{minipage}[t]{0.49\textwidth}
    \begin{center}
    \includegraphics[height=1.1in]{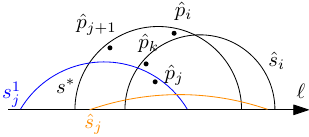}
    \caption{\footnotesize Illustrating the case $x(\hp_k)<x(\hp_j)$, $x(\hp_{j+1})<x(\hp_j)$, and $\hp_{j+1}\not\in s^1_j$.}   
    \label{fig:prunePoint18}
    \end{center}
    \end{minipage}
    \hspace{0.05in}
    \begin{minipage}[t]{0.49\textwidth}
    \begin{center}
    \includegraphics[height=1.1in]{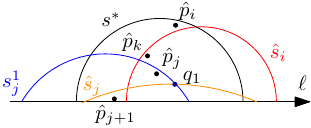}
    \caption{\footnotesize Illustrating the case $x(\hp_{j+1})<x(\hp_k)<x(\hp_j)$ and $\hp_{j+1}\in s^1_j$.}
    \label{fig:prunePoint17}
    \end{center}
    \end{minipage}
    \vspace{-0.15in}
    \end{figure}

    We discuss the two situations depending on whether $x(\hp_k) < x(\hp_{j+1})$ or $x(\hp_{j+1}) < x(\hp_k)$.
    \begin{enumerate}      
    
        \item If $x(\hp_k) < x(\hp_{j+1})$, then our goal is to prove $\hp_{j+1} \in \hs_i$ and find a disk to serve as $s^1_{j+1}$. We show that $s^1_j$ can serve as $s^1_{j+1}$. To this end, below we prove that 
        (1) $\hp_{j+1} \in \hs_i$, (2) $s^1_{j}\prec s^*$, (3) $\hp_{j+1} \in s^1_{j}$, (4) $\hp_{k} \notin s^1_{j}$, and (5) $\hp_i \notin s^1_{j}$. The proof is very similar to that for the case~\ref{item:case10}  (see Fig.~\ref{fig:prunePoint15} with $\hp_j$ and $\hp_{j+1}$ swapped). 
            
        First of all, (3) $\hp_{j+1} \in s^1_j$ has been proved above. Also,   
        (2) $s^1_{j}\prec s^*$, (4) $\hp_{k} \notin s^1_{j}$, and (5) $\hp_i \notin s^1_{j}$ follow directly from the inductive hypothesis as discussed above. 

        Using exactly the same argument as in the case~\ref{item:case10}, we can show that $\hp_k$ is vertically above $s^1_j$. Recall that $s^1_j\prec \hs_i$.
        Because $\hp_k$ is in $\hs_i$ and vertically above $s^1_j$, $s^1_j\prec \hs_i$, $x(\hp_k)<x(\hp_{j+1})$, and $\hp_{j+1} \in s^1_j$, we have (1) $\hp_{j+1} \in \hs_i$ by Observation~\ref{obser:fifo10}.
        

        \item 
        If $x(\hp_{j+1}) < x(\hp_k)$, then our goal is to find a disk to serve as $s^2_{j+1}$. We show that $\hs_j$ can serve as $s^2_{j+1}$. To this end,  below we prove that (1) $s^*\prec \hs_j$, (2) $\hp_{j+1} \in \hs_j$, (3) $\hp_{k} \notin \hs_j$; and (4) $\hp_i \notin \hs_j$ if $\hs_j\prec \hs_i$.

        We already showed above that (1) $s^*\prec \hs_j$ holds. Also,  (2) $\hp_{j+1} \in \hs_j$ follows the definition of $\hp_{j+1}$. 
        Because $\hp_j \in s^1_j$ and $\hp_j$ is vertically above $\hs_j$, $s_j^1\prec \hs_j$, $x(\hp_k) < x(\hp_j)$, and $\hp_k \notin s^1_j$, we have (3) $\hp_k \notin \hs_j$ by Observation~\ref{obser:fifo10} (see Fig.~\ref{fig:prunePoint17}).


        To prove (4), suppose $\hs_j\prec \hs_i$. We claim that $\hp_k$ is vertically above $\hs_j$ (see Fig.~\ref{fig:prunePoint17}). Indeed, since $\hp_k\not\in \hs_j$, it suffices to show that $\hp_k$ is to the left (resp., right) of the right (resp., left) endpoint of the lower segment of $\hs_j$. On the one hand, since $x(\hp_k)<x(\hp_j)$, $\hp_j\in s^1_j$, and $s^1_j\prec \hs_j$, $\hp_k$ must be to the left of the right endpoint of the lower segment of $\hs_j$. On the other hand, since $x(\hp_{j+1})<x(\hp_k)$ and $\hp_{j+1}\in \hs_j$, we can obtain that $\hp_k$ is to the right of the left endpoint of the lower segment of $\hs_j$. As such, $\hp_k$ must be vertically above $\hs_j$.
                
        Because $\hp_k$ is in $\hs_i$ and vertically above $\hs_j$, 
         $\hs_j\prec \hs_i$, $x(\hp_k) < x(\hp_i)$, and $\hp_i \notin \hs_i$, we have (4) $\hp_i \notin \hs_j$ by Observation~\ref{obser:fifo10} (see Fig.~\ref{fig:prunePoint17}). 
    \end{enumerate}

        \end{enumerate}

    \paragraph{{\bf The case $\boldsymbol{x(\hp_j) < x(\hp_k)}$.}} 
    In this case, according to the inductive hypothesis, there exists a disk $s^2_j\in S$ such that $s^*\prec s^2_j$, $\hp_j \in s^2_j$, and $\hp_{k} \notin s^2_j$; further, $\hp_i \notin s^2_j$ if $s^2_j\prec \hs_i$. Depending on whether $x(\hp_{j+1}) < x(\hp_j)$ and $x(\hp_j) < x(\hp_{j+1})$, there are two sub-cases.

    \begin{enumerate}
        \item The sub-case $x(\hp_{j+1}) < x(\hp_j)$. Since $x(\hp_j) < x(\hp_k)$, we have $x(\hp_{j+1}) < x(\hp_k)$.  Therefore, our goal is to find a disk to serve as $s^2_{j+1}$. We discuss two situations depending on whether $\hp_{j+1} \in s^2_j$. 
        \begin{enumerate}
        \item If $\hp_{j+1} \in s^2_j$, then we show that $s^2_j$ can serve as $s^2_{j+1}$ (see Fig.\ref{fig:prunePoint19}). To this end, we prove that (1) $s^*\prec s^2_{j}$, (2) $\hp_{j+1} \in s^2_{j}$, and (3) $\hp_{k} \notin s^2_{j}$; and (4) $\hp_i \notin s^2_{j}$ if $s^2_{j}\prec \hs_i$.
        
        First of all, we already have (1), (3), and (4) from the inductive hypothesis discussed above. (2) follows the condition of this case. 

    \begin{figure}[H]
    \begin{minipage}[t]{0.49\textwidth}
    \begin{center}
    \includegraphics[height=1.1in]{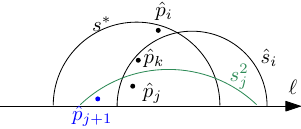}
    \caption{\footnotesize Illustrating the case $x(\hp_{j+1}) < x(\hp_j) < x(\hp_k)$ and $\hp_{j+1}\in s^2_j$.}
    \label{fig:prunePoint19}
    \end{center}
    \end{minipage}
    \hspace{0.05in}
    \begin{minipage}[t]{0.49\textwidth}
    \begin{center}
    \includegraphics[height=1.1in]{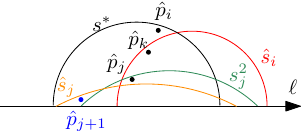}
    \caption{\footnotesize Illustrating the case $x(\hp_{j+1}) < x(\hp_j) < x(\hp_k)$ and $\hp_{j+1}\not\in s^2_j$.}   
    \label{fig:prunePoint20}
    \end{center}
    \end{minipage}
    \vspace{-0.15in}
    \end{figure}
        
        \item  If $\hp_{j+1} \notin s^2_j$, we show that $\hs_j$ can serve as $s^2_{j+1}$. To this end, we prove that (1) $s^*\prec \hs_j$, (2) $\hp_{j+1} \in \hs_j$, and (3) $\hp_{k} \notin \hs_j$; and (4) $\hp_i \notin \hs_j$ if $\hs_j\prec \hs_i$.

        First of all, since $x(\hp_{j+1}) < x(\hp_j)$, we have (1) $s^*\prec \hs_j$ by Observation~\ref{obser:diskS_star}. Also, (2) holds by definition. 
                              
        Since $\hp_j$ is in $s^2_j$ but not in $\hs_j$, $s^2_j\neq \hs_j$. We claim that $\hs_j\prec s^2_j$. Indeed, assume to the contrary that $s^2_j\prec \hs_j$. Then, since $\hp_j$ is in $s^2_j$ and vertically above $\hs_j$, $x(\hp_{j+1}) < x(\hp_j)$, and $\hp_{j+1} \notin s^2_j$, we have $\hp_{j+1}\not\in \hs_j$ by Observation~\ref{obser:fifo10}. But this incurs contradiction since $\hp_{j+1}\in \hs_j$ by the definition of $\hp_{j+1}$. As such, $\hs_j\prec s^2_j$ holds. 

        Since $\hp_j$ is in $s^2_j$ and vertically above $\hs_j$, $\hs_j\prec s^2_j$, $x(\hp_j) < x(\hp_k)$, and $\hp_k \notin s^2_j$, we have (3) $\hp_k \notin \hs_j$ by Observation~\ref{obser:fifo10} (see Fig.\ref{fig:prunePoint20}). 
                

        To prove (4), suppose $\hs_j\prec \hs_i$. We claim that $\hp_k$ is vertically above $\hs_j$ (see Fig.\ref{fig:prunePoint20}). Indeed, since $\hp_k \notin \hs_j$ as proved above, it suffices to show that $\hp_k$ is to the left (resp., right) of the right (resp., left) endpoint of the lower segment of $\hs_j$. On the one hand, since $\hp_j$ is vertically above $\hs_j$ and $x(\hp_j)<x(\hp_k)$, $\hp_k$ must be to the right of the left endpoint of the lower segment of $\hs_j$. On the other hand, since $\hp_k\in s^*$ (by Observation~\ref{obser:diskSstar}) and $s^*\prec \hs_j$, we know that $\hp_k$ must be to the left of the right endpoint of the lower segment of $\hs_j$. As such, $\hp_k$ must be vertically above $\hs_j$.

        Recall that $\hp_k \in \hs_i$. Since $\hp_k$ is in $\hs_i$ and vertically above $\hs_j$, $\hs_j\prec \hs_i$, $x(\hp_k) < x(\hp_i)$, and $\hp_i \notin \hs_i$, we have (4) $\hp_i \notin \hs_j$ by Observation~\ref{obser:fifo10}.
                
        \end{enumerate}

    \item The sub-case $x(\hp_j) < x(\hp_{j+1})$. Since $x(\hp_j) < x(\hp_{j+1})$, we have $\hs_j\prec s^*$ by Observation~\ref{obser:diskS_star}. As $s^*\prec \hs_i$ (by Observation~\ref{obser:si}) and $s^*\prec s^2_j$ (by the inductive hypothesis as discuss above), we have both $\hs_j\prec \hs_i$ and $\hs_j\prec s^2_j$. 

    We first prove that $\hp_{j+1} \in s^2_j$. Indeed, assume to the contrary that $\hp_{j+1} \not\in s^2_j$ (see Fig.~\ref{fig:prunePoint22}). Because $\hp_j$ is in $s^2_j$ and vertically above $\hs_j$, $\hs_j\prec s^2_j$, $x(\hp_j) < x(\hp_{j+1})$ and $\hp_{j+1} \notin s^2_j$, we have $\hp_{j+1}\not\in \hs_j$ by Observation~\ref{obser:fifo10}. But this incurs contradiction since $\hp_{j+1}\in \hs_j$ by definition.
            
    We discuss the two situations depending on whether $x(\hp_{j+1}) < x(\hp_k)$ or $x(\hp_k) < x(\hp_{j+1})$.
          
    \begin{enumerate}
    \item If $x(\hp_{j+1}) < x(\hp_k)$, then our goal is to find a disk to serve as $s^2_{j+1}$. We show that $s^2_j$ can serve as $s^2_{j+1}$. To this end, we prove that (1) $s^*\prec s^2_{j}$, (2) $\hp_{j+1} \in s^2_{j}$, and (3) $\hp_{k} \notin s^2_{j}$; and (4) $\hp_i \notin s^2_{j}$ if $s^2_{j}\prec \hs_i$.
        
        First of all, we already have (1), (3), and (4) from the inductive hypothesis. (2) has been proved above. 
                    
                    \item If $x(\hp_k) < x(\hp_{j+1})$, then our goal is to prove $\hp_{j+1} \in \hs_i$ and find a disk to serve as $s^1_{j+1}$. We                     
                    show that $\hs_j$ can serve as $s^1_{j+1}$. To this end, we prove that (1) $\hp_{j+1} \in \hs_i$, (2) $\hs_j\prec s^*$, (3) $\hp_{j+1} \in \hs_j$, (4) $\hp_{k} \notin \hs_j$, and (5) $\hp_i \notin \hs_j$.

                    First of all, (3) holds by definition. Also, we already proved (2) above. 
                    
                    Because $\hp_j$ is in $s^2_j$ and vertically above $\hs_j$, $\hs_j\prec s^2_j$, $x(\hp_j) < x(\hp_k)$, and $\hp_k \notin s^2_j$, we have (4) $\hp_k \notin \hs_j$ by Observation~\ref{obser:fifo10} (see Fig.~\ref{fig:prunePoint21}).                   

                    We argue that $\hp_k$ must be vertically above $\hs_j$. Indeed, since $\hp_k\not\in \hs_j$, it suffices to show that $\hp_k$ is to the left (resp., right) of the right (resp., left) endpoint of the lower segment of $\hs_j$. On the one hand, since $\hp_j$ is vertically above $\hs_j$ and $x(\hp_j)<x(\hp_k)$, $\hp_k$ must be to the right of the left endpoint of the lower segment of $\hs_j$. On the other hand, since $x(\hp_k) < x(\hp_{j+1})$ and $\hp_{j+1}\in \hs_j$, we know that $\hp_k$ must be to the left of the right endpoint of the lower segment of $\hs_j$. As such, we obtain that $\hp_k$ is vertically above $\hs_j$.

                    Because $\hp_k$ is in $\hs_i$ and vertically above $\hs_j$, $\hs_j\prec \hs_i$, $x(\hp_k) < x(\hp_i)$, and $\hp_i \notin \hs_i$, we have (5) $\hp_i \notin \hs_j$ by Observation~\ref{obser:fifo10} ( see Fig.~\ref{fig:prunePoint21}). Similarly, because $\hp_k$ is in $\hs_i$ and vertically above $\hs_j$, $\hs_j\prec \hs_i$, $x(\hp_k) < x(\hp_{j+1})$, and $\hp_{j+1} \in \hs_j$, we have (1) $\hp_{j+1} \in \hs_i$ by Observation~\ref{obser:fifo10}.
            \end{enumerate}

    \begin{figure}[H]
    \begin{minipage}[t]{0.49\textwidth}
    \begin{center}
    \includegraphics[height=1.1in]{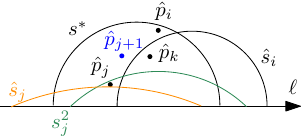}
    \caption{\footnotesize Illustrating the case $x(\hp_j) < x(\hp_k)$, $x(\hp_j)<x(\hp_{j+1})$, and $\hp_{j+1}\not\in s^2_j$.}   
    \label{fig:prunePoint22}
    \end{center}
    \end{minipage}
    \hspace{0.05in}
    \begin{minipage}[t]{0.49\textwidth}
    \begin{center}
    \includegraphics[height=1.1in]{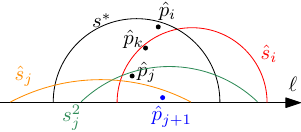}
    \caption{\footnotesize Illustrating the case $x(\hp_j) < x(\hp_k)<x(\hp_{j+1})$.}
    \label{fig:prunePoint21}
    \end{center}
    \end{minipage}
    \vspace{-0.15in}
    \end{figure}
            
        \end{enumerate}
         
In summary, the above proves that the statement of Lemma~\ref{lemma:lmPruningPoints} still holds for $j+1$. This also proves Lemma~\ref{lemma:lmPruningPoints}.

\section{Algorithm implementation}
\label{sec:hitimplement}
In this section, we present the implementation of our algorithm. In particular, we describe how to implement the first two steps of the algorithm: (1) Compute $a(i)$ and $b(i)$ for all disks $s_i\in S$; (2) find the subset $Q$ of all prunable points from $P$. 


The following lemma gives the implementation for the first step of the algorithm. 

\begin{lemma}
    \label{lem:40}
Computing $a(i)$ and $b(i)$ for all disks $s_i\in S$ can be done in $O(m\log^2n+(m+n)\log (m+n))$ time.
\end{lemma}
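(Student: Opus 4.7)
The plan is to augment a balanced binary search tree on $P$ with halfspace-emptiness structures and run a guided descent per disk. First, sort $P$ by $x$-coordinate and $S$ by $x(l_i)$, which costs $O((m+n)\log(m+n))$ time. Build a balanced BST $T$ on the sorted $P$; for each internal node $v$, let $P_v \subseteq P$ denote the set of points stored in the subtree of $v$.

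To compute $a(i)$ for a disk $s_i$, walk down $T$ from the root. At each internal node $v$, test whether the subtree of the left child contains any point inside $s_i$; if so, descend into the left child, otherwise into the right child. The descent terminates at a leaf whose stored point has the smallest $x$-coordinate (hence smallest index) among all points of $P$ inside $s_i$, which is exactly $p_{a(i)}$. Computing $b(i)$ is symmetric, preferring the right child at each step. Each descent visits $O(\log n)$ nodes.

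For the per-node subtree-intersection test, use the standard lifting $\phi:(x,y)\mapsto (x,y,x^2+y^2)$: a point $p$ lies in a disk $s$ with center $(c,d)$ and radius $r$ iff $\phi(p)$ lies in the halfspace $\{z - 2cx - 2dy + c^2 + d^2 - r^2 \le 0\}$. Thus ``$P_v \cap s_i \neq \emptyset$'' reduces to a 3D halfspace-emptiness query on $\phi(P_v)$. At each internal node $v$, precompute the 3D convex hull of $\phi(P_v)$ with a Dobkin--Kirkpatrick hierarchy, so each emptiness query runs in $O(\log |P_v|)$ time; the preprocessing at $v$ is $O(|P_v|\log|P_v|)$, summing to $O(n\log^2 n)$ over $T$. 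Combined with the $O(m\log^2 n)$ total query cost across all disks, this matches the claimed $O(m\log^2 n + (m+n)\log(m+n))$ bound; when $m$ is much smaller than $n$, fractional cascading between BST levels (or a single persistent hull built incrementally in $x$-order) can trim the preprocessing to $O(n\log n)$ while keeping the per-disk query cost at $O(\log^2 n)$.

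The main obstacle is arranging the halfspace-emptiness subroutine with the right preprocessing-vs-query trade-off so the total cost matches the stated bound in every regime of $m$ and $n$; once the structure is in place, correctness of the descent is immediate, because the invariant ``the current subtree contains $p_{a(i)}$'' is preserved by always preferring the left child whenever its subtree witnesses $s_i$. A minor subtlety is that we rely on the paper's standing assumption that every disk of $S$ is hit by at least one point of $P$, which guarantees that each descent terminates at a valid leaf rather than failing.
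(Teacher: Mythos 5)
Your overall architecture is the same as the paper's: a balanced tree over the $x$-sorted points, canonical subsets $P_v$, a per-node ``does $s_i$ contain a point of $P_v$?'' test implemented by a closest-point/halfspace-emptiness structure, and a guided root-to-leaf descent costing $O(\log^2 n)$ per disk. The descent and its invariant are fine. The gap is in the preprocessing: building the lifted 3D hull plus Dobkin--Kirkpatrick hierarchy from scratch at every node costs $\sum_v O(|P_v|\log|P_v|)=O(n\log^2 n)$, which is \emph{not} within the claimed bound $O(m\log^2 n+(m+n)\log(m+n))$ when $m$ is small (e.g., for $m\le n/\log n$ the claimed bound is $O(n\log n)$). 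You notice this, but the two patches you offer do not hold up as stated. Fractional cascading speeds up repeated searches across catalogs; it does nothing to reduce the cost of \emph{constructing} the per-node hulls, which is where the extra $\log n$ factor lives. The ``single persistent hull built incrementally in $x$-order'' is also unsubstantiated: it silently replaces the tree descent by a binary search over prefix versions, and it presupposes a partially persistent insertion-only 3D lower hull (equivalently, Delaunay/Voronoi of every prefix with point location) buildable in $O(n\log n)$ total time --- but the total structural change of the Delaunay triangulation under insertions, even in $x$-sorted order, is not linear in the worst case, so no such structure follows from what you cite.

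The paper closes exactly this hole by exploiting that $P_v=P_u\cup P_w$ for the two children $u,w$: it computes the Voronoi diagram $\vd_v$ by \emph{merging} $\vd_u$ and $\vd_w$ in $O(|P_v|)$ time (Kirkpatrick's linear-time merge; here the children are even separated by a vertical line), and then builds an $O(|P_v|)$-time point-location structure on $\vd_v$. Bottom-up over the tree this gives $O(\sum_v|P_v|)=O(n\log n)$ preprocessing, and each node test is a single point-location query (find the point of $P_u$ closest to the center $c_i$, then test membership in $s_i$), giving the $O(m\log^2 n)$ query term and hence the stated bound in every regime of $m$ and $n$. Your lifted-hull formulation could be repaired in the same spirit --- e.g., merge the children's lifted hulls in linear time (they are separated by a vertical plane) and note that the DK hierarchy is linear-time given the hull --- but as written the preprocessing analysis does not establish the lemma.
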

\begin{proof}
We only discuss how to compute $a(i)$ since computing $b(i)$ can be done analogously. 

Recall that points of $P$ are indexed in ascending order of their $x$-coordinates as $p_1,p_2,\ldots,p_n$. 
Let $T$ be a complete binary search tree whose leaves from left to right correspond to points of $P$ in their index order. Since $n=|P|$, the height of $T$ is $O(\log n)$. For each node $v\in T$, let $P_v$ denote the subset of points of $P$ in the leaves of the subtree rooted at $v$. Our algorithm is based on the following observation: a disk $s_i\in S$ contains a point of $P_v$ if and only if $s_i$ contains the closest point of $P_v$ to $c_i$, where $c_i$ is the center of $s_i$. In light of this observation, we construct the Voronoi diagram for $P_v$, denoted by $\vd_v$, and build a point location data structure on $\vd_v$ so that each point location query can be answered in $O(\log n)$ time~\cite{ref:EdelsbrunnerOp86,ref:KirkpatrickOp83}. 
For time analysis, after $\vd_v$ is computed, building the point location data structure on $\vd_v$ takes $O(|P_v|)$ time~\cite{ref:EdelsbrunnerOp86,ref:KirkpatrickOp83}. To compute $\vd_v$, if we do so from scratch, then it takes $O(|P_v|\log|P_v|)$ time. However, using Kirkpatrick's algorithm~\cite{ref:KirkpatrickEf79},  we can compute $\vd_v$ in $O(|P_v|)$ time by merging the Voronoi diagrams $\vd_u$ and $\vd_w$ for the two children $u$ and $w$ of $v$, since $P_v=P_u\cup P_w$. As such, if we construct the Voronoi diagrams for all nodes of $T$ in a bottom-up manner, the total time is linear in $\sum_{v\in T}|P_v|$, which is $O(n\log n)$. 

For each disk $s_i\in S$, we can compute $a(i)$ using $T$, as follows. 
Starting from the root of $T$, for each node $v$, we do the following. Let $u$ and $w$ be the left and right children of $v$, respectively. First, we determine whether $s_i$ contains a point of $P_u$. To this end, using a point location query on $\vd_u$, we find the point $p$ of $P_v$ closest to $c_i$. As discussed above, $s_i$ contains a point of $P_u$ if and only if $p\in s_i$. If $p\in s_i$, then we proceed with $v=u$; otherwise, we proceed with $v=w$. In this way, $a(i)$ can be computed after searching a root-to-leaf path of $T$, which has $O(\log n)$ nodes as the height of $T$ is $O(\log n)$. Because we spend $O(\log n)$ time on each node, the total time to compute $a(i)$ is $O(\log^2 n)$. The time for computing $a(i)$ for all disks $s_i\in S$ is thus $O(m\log^2 n)$. 

In summary, the overall time to compute $a(i)$ for all disks $s_i \in S$ is bounded by $O(m\log^2n+(n+m)\log (n+m))$.
\qed
\end{proof}

With $a(i)$ and $b(i)$ computed in Lemma~\ref{lem:40}, Lemma~\ref{lem:50} finds all prunable points of $P$. The algorithm of Lemma~\ref{lem:50} relies on the following observation. 

\begin{observation}\label{obser:70}
For any point $p_k\in P$, $p_k$ is prunable if and only if there is a disk $s_i\in S$ such that $p_k\not\in s_i$ and $a(i)\leq k\leq b(i)$. 
\end{observation}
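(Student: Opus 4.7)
The statement is essentially a boundary-case clean-up: the definition of prunable uses the strict inequality $a(i)<k<b(i)$, while the observation allows the weaker condition $a(i)\leq k\leq b(i)$. The plan is to show these two conditions are in fact equivalent once we also demand $p_k\notin s_i$, and then the equivalence with prunability is immediate.

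The forward direction ($\Rightarrow$) is trivial: if $p_k$ is prunable, Definition~\ref{def:prune} supplies a disk $s_i\in S$ with $p_k\notin s_i$ and $a(i)<k<b(i)$, which obviously implies $a(i)\leq k\leq b(i)$.

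For the reverse direction ($\Leftarrow$), suppose there exists $s_i\in S$ with $p_k\notin s_i$ and $a(i)\leq k\leq b(i)$. The key observation is that $p_{a(i)}\in s_i$ and $p_{b(i)}\in s_i$ by the very definitions of $a(i)$ and $b(i)$ as indices of points covered by $s_i$. Hence if $k=a(i)$, then $p_k=p_{a(i)}\in s_i$, contradicting the hypothesis $p_k\notin s_i$; similarly $k=b(i)$ is impossible. Therefore the inequalities must be strict, $a(i)<k<b(i)$, which together with $p_k\notin s_i$ is exactly Definition~\ref{def:prune}, so $p_k$ is prunable.

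There is no real obstacle here; the only thing to be careful about is invoking the right half of the definition of $a(i)$ and $b(i)$, namely that the witnessing indices correspond to points that actually lie inside $s_i$ (not merely indices). The practical value of the observation (presumably exploited in Lemma~\ref{lem:50}) is that it lets the implementation search for any disk covering the interval $[a(i),b(i)]$ containing $k$ without having to separately handle the two endpoints, since the endpoint cases are automatically ruled out by the $p_k\notin s_i$ condition.
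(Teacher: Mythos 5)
Your proof is correct and follows essentially the same argument as the paper: the forward direction is immediate from the definition, and for the reverse direction you rule out $k=a(i)$ and $k=b(i)$ by noting that $p_{a(i)},p_{b(i)}\in s_i$ while $p_k\notin s_i$, forcing the strict inequalities. Nothing further is needed.
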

\begin{proof}
If $p_k$ is prunable, then by definition there is a disk $s_i\in S$ such that $p_k\not\in s_i$ and $a(i)< k< b(i)$. 

On the other hand, suppose that there is a disk $s_i\in S$ such that $p_k\not\in s_i$ and $a(i)\leq k\leq b(i)$. By the 
definition of $a_i$, $s_i$ contains the point $p_{a(i)}$. Since $p_k\not\in s_i$, we obtain  $k\neq a(i)$. By a similar argument, we have $k\neq b(i)$. As such, since $a(i)\leq k\leq b(i)$, we can derive $a(i)< k< b(i)$. Therefore, $p_k$ is prunable. \qed
\end{proof}

\begin{lemma}
    \label{lem:50}
All prunable points of $P$ can be found in $O((n+m)\log(n+m))$ time.
\end{lemma}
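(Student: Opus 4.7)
By Observation~\ref{obser:70}, $p_k\in P$ is prunable iff some disk $s_i$ satisfies $p_k\notin s_i$ and $a(i)\leq k\leq b(i)$. Since $p_{a(i)},p_{b(i)}\in s_i$, the condition $p_k\notin s_i$ forces $a(i)<k<b(i)$, whence $x(l_i)<x(p_k)<x(r_i)$; combined with $p_k\notin s_i$ this means exactly that $p_k$ lies strictly above the upper arc of $s_i$ at abscissa $x(p_k)$. So the task reduces to: for each $k\in [1,n]$, decide whether some disk $s_i$ with $a(i)<k<b(i)$ has its upper arc below $y(p_k)$ at $x=x(p_k)$.

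The plan is to build a segment tree $T$ on the index range $[1,n]$ and, in the standard canonical fashion, insert each disk $s_i$ into the $O(\log n)$ nodes of $T$ whose spans are contained in $[a(i)+1,b(i)-1]$. Then for every leaf $k$, the disks stored at ancestors of leaf $k$ are exactly those $s_i$ with $a(i)<k<b(i)$. At each node $v$ I precompute the lower envelope $E_v$ of the upper arcs of the disks $D_v$ assigned to $v$; by the single-intersection hypothesis any two upper arcs meet at most once above $\ell$, making them pseudo-lines, so $E_v$ has complexity $O(|D_v|)$ and can be constructed in $O(|D_v|)$ time provided the disks are handed to $v$ in the master sorted order of $S$ (by $x(l_i)$), which I maintain throughout the canonical insertions. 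Since $\sum_v|D_v|=O(m\log n)$, the total preprocessing and storage are $O(m\log n)$.

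For each point $p_k$ I walk the root-to-leaf-$k$ path of $T$, and at every node $v$ along it I test whether $y(p_k)>E_v(x(p_k))$; as soon as one node witnesses this, $p_k$ is declared prunable. Note that for every $s_i\in D_v$ on this path, $a(i)<k<b(i)$ guarantees $x(l_i)<x(p_k)<x(r_i)$, so the envelope height is well-defined at $x(p_k)$. A naive binary search at each of the $O(\log n)$ ancestor nodes costs $O(\log m)$, giving $O((n+m)\log^2(n+m))$ overall, a log factor off the target.

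The main obstacle is shaving this last logarithmic factor. My plan is to apply fractional cascading to the sorted breakpoint abscissae of the envelopes $E_v$ along root-to-leaf paths of $T$: one $O(\log m)$ search at the root locates $x(p_k)$, after which each descendant lookup costs $O(1)$ via cascading pointers. The per-query cost then drops to $O(\log(n+m))$ and the total query cost to $O(n\log(n+m))$. The cascading auxiliary lists can themselves be built in $O(m\log n)$ time because each $E_v$ is static and each disk contributes to only $O(\log n)$ envelope lists. The delicate point to verify is that the position reached in $E_v$ by a cascaded search suffices to read off the envelope's height at $x(p_k)$ (so that the $y$-comparison can be performed in $O(1)$); this follows from the monotone indexing of envelope pieces by their $x$-range, together with the linear complexity of $E_v$ enforced by single intersection.
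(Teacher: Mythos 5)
Your overall plan is the same as the paper's: a segment tree on the point indices storing each disk at the canonical nodes of its index interval $[a(i),b(i)]$, a per-node structure of total size $O(m\log n)$ that lets you test ``is $p_k$ outside some disk stored here,'' and fractional cascading along the root-to-leaf path to reduce the per-point cost to $O(\log(n+m))$. The reduction via Observation~\ref{obser:70} and the observation that every disk stored at an ancestor of leaf $k$ has $x(l_i)<x(p_k)<x(r_i)$ are also both correct.

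There is, however, one load-bearing step whose justification does not hold as written: you claim that the lower envelope $E_v$ of the upper arcs of $D_v$ has complexity $O(|D_v|)$ and is constructible in $O(|D_v|)$ time ``because the single-intersection property makes the arcs pseudo-lines.'' The upper arcs are not pseudo-lines; they are bounded $x$-monotone arcs with endpoints on $\ell$, i.e.\ pseudo-segments, and the lower envelope of pseudo-segments pairwise crossing at most once is a Davenport--Schinzel sequence of order $3$, which is $\Theta(t\alpha(t))$ in general (indeed, already for two arcs with staggered spans the envelope pattern $i,j,i,j$ occurs, so the order-$2$ bound you are implicitly invoking fails). If the envelope were superlinear, both your $O(m\log n)$ storage/preprocessing bound and the claimed linear-time Graham-scan-style construction from the sorted order would need new arguments. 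The fix is exactly what the paper does: since, as you note, every query abscissa $x(p_k)$ probed at node $v$ lies in the common slab $[\max_{i\in D_v}x(l_i),\ \min_{i\in D_v}x(r_i)]$, it suffices to build the envelope restricted to that slab (equivalently, the upper boundary of the common intersection $C_v$ of the disks of $D_v$). Over the common slab every arc spans the entire domain, so the arcs do behave like pseudo-lines there, the restricted envelope has $O(|D_v|)$ breakpoints by the order-$2$ Davenport--Schinzel bound, and it can be built in $O(|D_v|)$ time from the master sorted order by a Graham-scan-type sweep. With that replacement (and cascading on the restricted envelopes' breakpoints), your argument matches the paper's proof and yields the stated $O((n+m)\log(n+m))$ bound.
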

\begin{proof}
Recall that $\ell$ denotes the $x$-axis. We define $T$ as the standard segment tree~\cite[Section 10.3]{ref:deBergCo08} on the $n$ points of $\ell$ whose $x$-coordinates are equal to $1,2,\ldots,n$, respectively. The height of $T$ is $O(\log n)$. For each disk $s_i\in S$, let $I_i$ denote the interval $[a(i),b(i)]$ of $\ell$. Following the definition of the standard segment tree~\cite[Section 10.3]{ref:deBergCo08}, we store $I_i$ in $O(\log n)$ nodes of $T$. For each node $v\in T$, let $S_v$ denote the subset of disks $s_i$ of $S$ whose interval $I_i$ is stored at $v$. As such, $\sum_{v\in T}|S_v|=O(m\log n)$. 

Consider a point $p_k\in P$. The tree $T$ has a leaf corresponding to a point of $\ell$ whose $x$-coordinate is equal to $k$, called {\em leaf-$k$}. Let $\pi_k$ denote the path of $T$ from the root to leaf-$k$. Following the definition of the segment tree, we have the following observation: $\bigcup_{v\in \pi_k}S_v=\{s_i\ |\ s_i\in S, a(i)\leq k\leq b(i)\}$. By Observation~\ref{obser:70}, to determine whether $p_k$ is prunable, it suffices to determine whether there is a node $v\in \pi_k$ such that $S_v$ has a disk $s_i$ that does not contain $p_k$. Recall that all points of $P$ are above $\ell$ while the centers of all disks of $S$ are below $\ell$. Let $C_v$ denote the common intersection of all disks of $S_v$ in the halfplane above $\ell$. Observe that $S_v$ has a disk $s_i$ that does not contain $p_k$ if and only if $p_k$ is outside $C_v$. Based on this observation, for each node $v\in T$, we compute $C_v$ and store it at $v$. Due to the single-intersection property that the upper arcs of every two disks of $S$ intersect at most once, $C_v$ has $O(|S_v|)$ vertices; in addition, by adapting Graham's scan, $C_v$ can be computed in $O(|S_v|)$ time if the centers of all the disks of $S_v$ are sorted by $x$-coordinate (due to the non-containment property, this is also the order of the disks sorted by the left or right endpoints of their upper arcs). Assuming that the sorted lists of $S_v$ as above are available for all nodes $v\in T$, the total time for constructing $C_v$ for all nodes $v\in T$ is linear in $\sum_{v\in T}|S_v|$, which is $O(m\log n)$. We show that the sorted lists of $S_v$ for all nodes $v\in T$ can be computed in $O(m\log m+m\log n)$ time, as follows.  At the start of the algorithm, we sort all disks of $S$ by the $x$-coordinates of their centers in $O(m\log m)$ time. Then, for each disk $s_i$ of $S$ following this sorted order, we find the nodes $v$ of $T$ where the interval $I_i$ should be stored, and add $s_i$ to $S_v$, which can be done in $O(\log n)$ time~\cite[Section 7.4]{ref:deBergCo08}. In this way, after all disks of $S$ are processed as above, $S_v$ for every node $v\in T$ is automatically sorted. As such, all processing work on $T$ together takes $O((m+n)\log (m+n))$ time. 

For each point $p_k\in P$, to determine whether $p_k$ is prunable, following the above discussion, we determine whether $p_k$ is outside $C_v$ for each node $v\in \pi_k$. Deciding whether $p_k$ is outside $C_v$ can be done in $O(\log m)$ time by binary search. Indeed, since the centers of all disks are below $\ell$, the boundary of $C_v$ consists of a segment on $\ell$ bounding $C_v$ from below and an $x$-monotone curve bounding $C_v$ from above. The projections of the vertices of $C_v$ onto $\ell$ partition $\ell$ into a set $\calI_v$ of $O(|S_v|)$ intervals. If we know the interval of $\calI_v$ that contains $x(p_k)$, the $x$-coordinate of $p_k$, then whether $p_k$ is outside $C_v$ can be determined in $O(1)$ time. Clearly, we can find the interval of $\calI_v$ that contains $x(p_k)$ in $O(\log m)$ time by binary search. In this way, whether $p_k$ is prunable can be determined in $O(\log m\log n)$ time as $\pi_k$ has $O(\log n)$ nodes. 
The time can be improved to $O(\log m+\log n)$ using fractional cascading~\cite{ref:ChazelleFr86}, as follows. 

We construct a fractional cascading data structure on the intervals of $\calI_v$ of all nodes $v\in T$, which takes $O(m\log n)$ time~\cite{ref:ChazelleFr86} since the total number of such intervals is $O(m\log n)$. With the fractional cascading data structure, for each point $p_k\in P$, we only need to do binary search on the set of the intervals stored at the root of $T$ to find the interval containing $x(p_k)$, which takes $O(\log (m\log n))$ time. Subsequently, following the path $\pi_k$ in a top-down manner, the interval of $\calI_v$ containing $x(p_k)$ for each node $v\in \pi_k$ can be determined in $O(1)$ time~\cite{ref:ChazelleFr86}. As such, whether $p_k$ is prunable can be determined in $O(\log n+\log m)$ time. Hence, the total time for checking all the points $p_k\in P$ is $O(n\log (m+n))$. 

In summary, the time complexity of the overall algorithm for finding all prunable disks of $S$ is bounded by $O((n+m)\log (n+m))$. \qed
\end{proof}

With Lemmas~\ref{lem:40} and \ref{lem:50}, Theorem~\ref{theo:hit} is proved. 

\paragraph{\bf An algebraic decision tree algorithm.} In the algebraic decision tree model, where only comparisons are counted towards time complexities, the problem can be solved in $O((n+m)\log (n+m))$ time, i.e., using $O((n+m)\log (n+m))$ comparisons. To this end, observe that the entire algorithm, with the exception of Lemma~\ref{lem:40}, takes $O((n+m)\log(n+m))$ time. As such, we only need to show that Lemma~\ref{lem:40} can be solved using $O((n+m)\log (n+m))$ comparisons. For this, notice that the factor $O(m\log^2 n)$ in the algorithm of Lemma~\ref{lem:40} is caused by the point location queries on the Voronoi diagrams $\vd_v$. The number of point location queries is $O(m\log n)$. The total combinatorial complexity of the Voronoi diagrams $\vd_v$ of all nodes $v\in T$ is $O(n\log n)$. To answer these point location queries, 
we employ a method recently introduced by Chan and Zheng~\cite{ref:ChanHo23}. In particular, by applying \cite[Theorem~7.2]{ref:ChanHo23}, all point location queries can be solved using $O((n+m)\log (n+m))$ comparisons (specifically, following the notation in \cite[Theorem~7.2]{ref:ChanHo23}, we have $t=O(n)$, $L=O(n\log n)$, $M=O(m\log n)$, and $N=O(n+m)$ in our problem; according to the theorem, all point location queries can be answered using $O(L+M+N\log N)$ comparisons, which is $O((n+m)\log (n+m))$). 

\paragraph{\bf The unit-disk case.} We consider the case where all disks of $S$ have the same radius (and the points of $P$ are separated from the centers of all disks of $S$ by the $x$-axis $\ell$). 
As discussed in Section~\ref{sec:intro}, this problem can be solved in $O((n+m)\log(n+m))$ time by reducing it to a line-separable unit-disk coverage problem and then applying the algorithm in~\cite{ref:LiuOn23}. Here, we show that our approach can provide an alternative algorithm with asymptotically the same runtime. 

We apply the same algorithm as above. Observe that the algorithm, except for Lemma~\ref{lem:40}, runs in $O((n+m)\log(n+m))$ time. Hence, it suffices to show that Lemma~\ref{lem:40} can be implemented in $O((n+m)\log(n+m))$ time for the unit-disk case. To this end, we have the following lemma. 

\begin{lemma}\label{lem:60}
If all disks of $S$ have the same radius, then $a(i)$ and $b(i)$ for all disks $s_i\in S$ can be computed in $O((n+m)\log(n+m))$ time. 
\end{lemma}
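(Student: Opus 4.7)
The plan is to reuse the segment-tree framework of Lemma~\ref{lem:40} but to replace the Voronoi-diagram-based test at each node with a one-dimensional lower-envelope test tailored to the unit-disk setting, and then to accelerate the root-to-leaf traversal by fractional cascading. Let $r$ denote the common radius. For each $p_j = (x_j, y_j) \in P$, let $L_j(x) = y_j - \sqrt{r^2 - (x - x_j)^2}$ be the lower arc of the radius-$r$ disk centered at $p_j$. Because $c_i$ lies below $\ell$ and $p_j$ lies above $\ell$, one checks that $p_j \in s_i$ iff $\alpha_i \in [x_j - r,\, x_j + r]$ and $\beta_i \geq L_j(\alpha_i)$, where $c_i = (\alpha_i, \beta_i)$ is the center of $s_i$. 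Consequently, for any subset $P_v \subseteq P$, the disk $s_i$ contains some point of $P_v$ iff $\beta_i \geq LE_v(\alpha_i)$, where $LE_v$ is the lower envelope of $\{L_j : p_j \in P_v\}$.

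I would build the segment tree $T$ over $P$ as in Lemma~\ref{lem:40} and, at each internal node $v$, compute $LE_v$ bottom-up by merging the envelopes of $v$'s two children. Since all arcs $L_j$ are congruent and any two intersect in at most two points, each $LE_v$ is a Davenport--Schinzel sequence of order two and has complexity $O(|P_v|)$; moreover, a left-to-right sweep that maintains the currently lower arc of each child envelope and advances at breakpoints or newly discovered intersections merges two envelopes in $O(|P_v|)$ time, in the spirit of Kirkpatrick's linear-time Voronoi merge. Summed over $T$, this preprocessing takes $O(n \log n)$ time. To compute $a(i)$ (and symmetrically $b(i)$), I would descend from the root of $T$: at a node $v$ with left child $u$, go left if $\alpha_i$ lies in the domain of $LE_u$ and $\beta_i \geq LE_u(\alpha_i)$, otherwise go right; this returns the leaf indexing the smallest (resp.\ largest) point of $P$ inside $s_i$.

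A naive binary search in each $LE_v$ yields $O(\log^2 n)$ per disk and only matches Lemma~\ref{lem:40}. To shave the extra logarithm, I would apply fractional cascading~\cite{ref:ChazelleFr86} to the $x$-breakpoint catalogs of the envelopes along the parent-child edges of $T$; since $T$ has binary degree and the total catalog size is $O(n \log n)$, the cascading structure fits in $O(n \log n)$ space and preprocessing time, and each successive search along a root-to-leaf path is answered in $O(1)$ time after one initial $O(\log n)$ binary search at the root. This drops the per-disk query cost to $O(\log n)$, all queries to $O(m \log n)$, and the overall running time to $O((n + m) \log(n + m))$ as required. The main technical hurdles will be (a) establishing the linear-time merge of two lower envelopes of congruent arcs, which reduces to a standard sweep over two $x$-sorted chains of $O(1)$-complexity arcs with a constant number of pairwise intersections, and (b) verifying that fractional cascading applies when the descending path is chosen adaptively from the query results; the latter is immediate because the standard framework already supports bounded-degree catalog graphs in which the edge traversed at each step is decided at query time.
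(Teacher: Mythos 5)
Your overall plan is essentially the paper's: the same tree over $P$, with the per-node membership test ``does $s_i$ contain a point of $P_v$'' reduced to locating the center $c_i$ against the lower envelope of congruent arcs associated with the points of $P_v$, envelopes merged bottom-up in linear time, and the root-to-leaf descent accelerated by fractional cascading over the breakpoint catalogs. The reformulation $p_j\in s_i \iff \alpha_i\in[x_j-r,x_j+r]$ and $\beta_i\geq L_j(\alpha_i)$ is correct, and the cascading step matches the paper.

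There is, however, a genuine soft spot in the one claim everything else rests on: that each $LE_v$ has complexity $O(|P_v|)$ ``because any two arcs intersect in at most two points, so $LE_v$ is a Davenport--Schinzel sequence of order two.'' That inference is not valid for arcs with endpoints: for bounded curve segments that pairwise cross at most $s$ times, the lower envelope is a DS sequence of order $s+2$, not $s$ (already line segments, which cross at most once, have envelopes of complexity $\Theta(n\alpha(n))$). Taken at face value, your bound would be $\lambda_4(|P_v|)=\Theta(|P_v|\,2^{\alpha(|P_v|)})$, which leaks a superconstant factor into the $O(n\log n)$ preprocessing and misses the claimed $O((n+m)\log(n+m))$ bound. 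The linear bound is in fact true here, but it needs an argument that exploits more than the crossing number. The paper sidesteps the endpoint issue by clipping each arc to its portion below $\ell$ and extending it with $\ell$ itself into a total single-crossing curve, after which the standard DS argument gives $O(|P_v|)$ vertices (and the inclusion of $\ell$ is harmless since every query point $c_i$ lies below $\ell$). Alternatively, one can argue directly for your unclipped semicircles that an $abab$ alternation is impossible: with centers at heights $y_a,y_b$, if arc $b$ first appears on the envelope strictly inside arc $a$'s domain it must enter below $a$ (otherwise two crossings are forced), giving $y_b<y_a$; but for $a$ to reappear and then vanish while $b$ remains, $a$ must lie below $b$ at $a$'s right endpoint, where $a$ attains its maximum height $y_a$, forcing $y_a\leq y_b$, a contradiction. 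Either repair should be added; as written, the complexity claim (and hence the linear-time merges and the final running time) is unsupported.
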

\begin{proof}
We only discuss how to compute $a(i)$ since the algorithm for $b(i)$ is similar. We modify the algorithm in the proof of Lemma~\ref{lem:40} and follow the notation there. 

For any disk $s_i\in S$, to compute $a(i)$, recall that a key subproblem is to determine whether $s_i$ contains a point of $P_v$ for a node $v\in T$. To solve the subproblem, the algorithm of Lemma~\ref{lem:40} uses Voronoi diagrams. Here, we use a different approach by exploring the property that all disks of $S$ have the same radius, say $r$. For any point $p\in P$, let $D_p$ denote the disk of radius $r$ and centered at $p$. Define $\calD_v=\{D_p\ |\ p\in P_v\}$. 
For each point $p\in P$, since $p$ is above the axis $\ell$, the portion of the boundary of $D_p$ below $\ell$ is an arc on the lower half circle of the boundary of $D_p$, and we call it the {\em lower arc} of $D_p$. 
Let $\calL_v$ denote the lower envelope of $\ell$ and the lower arcs of all disks of $\calD_v$. Our method is based on the observation that $s_i$ contains a point of $P_v$ if and only if $c_i$ is above $\calL_v$, where $c_i$ is the center of $s_i$. 

In light of the above discussion, we construct $\calL_v$ for every node $v\in T$. 
Since all disks of $\calD_v$ have the same radius and all their centers are above $\ell$, the lower arcs of every two disks of $\calD_v$ intersect at most once. Due to this single-intersection property, $\calL_v$ has at most $O(|P_v|)$ vertices. To see this, we can view the lower envelope of each lower arc of $\calD_v$ and $\ell$ as an extended arc. Every two such extended arcs still cross each other at most once and therefore their lower envelope has $O(|P_v|)$ vertices following the standard Davenport-Schinzel sequence argument~\cite{ref:SharirDa96} (see also \cite[Lemma 3]{ref:ChanAl16} for a similar problem). Notice that $\calL_v$ is exactly the lower envelope of these extended arcs and thus $\calL_v$ has $O(|P_v|)$ vertices. Note also that $\calL_v$ is $x$-monotone.  In addition, given $\calL_u$ and $\calL_w$, where $u$ and $w$ are the two children of $v$, $\calL_v$ can be computed in $O(|P_v|)$ time by a straightforward line sweeping algorithm. As such, if we compute $\calL_v$ for all nodes $v\in T$ in a bottom-up manner, the total time is linear in $\sum_{v\in T}|P_v|$, which is $O(n\log n)$. 

For each disk $s_i\in S$, we now compute $a(i)$ using $T$, as follows. Starting from the root of $T$, for each node $v$, we do the following. Let $u$ and $w$ be the left and right children of $v$, respectively. We first determine whether $c_i$ is above $\calL_u$; since $|P_u|\leq n$, this can be done in $O(\log n)$ time by binary search. More specifically, the projections of the vertices of $\calL_u$ onto $\ell$ partition $\ell$ into a set $\calI_u$ of $O(P_u)$ intervals. If we know the interval of $\calI_u$ that contains $x(c_i)$, the $x$-coordinate of $c_i$, then whether $c_i$ is above $\calL_u$ can be determined in $O(1)$ time. Clearly, finding the interval of $\calI_u$ containing $x(c_i)$ can be done by binary search in $O(\log n)$ time. If $c_i$ is above $\calL_u$, then $s_i$ must contain a point of $P_u$; in this case, we proceed with $v=u$. Otherwise, we proceed with $v=w$. In this way, $a(i)$ can be computed after searching a root-to-leaf path of $T$, which has $O(\log n)$ nodes as the height of $T$ is $O(\log n)$. Because we spend $O(\log n)$ time on each node, the total time for computing $a(i)$ is $O(\log^2 n)$. As in Lemma~\ref{lem:50}, the time can be improved to $O(\log n)$ using fractional cascading~\cite{ref:ChazelleFr86}, as follows. 

We construct a fractional cascading data structure on the intervals of $\calI_v$ of all nodes $v\in T$, which takes $O(n\log n)$ time~\cite{ref:ChazelleFr86} since the total number of such intervals is $O(n\log n)$. With the fractional cascading data structure, for each disk $s_i\in S$, we only need to do binary search on the set of the intervals stored at the root of $T$ to find the interval containing $x(c_i)$, which takes $O(\log n)$ time. After that, the interval of $\calI_u$ containing $x(c_i)$ for each node $u$ in the algorithm as discussed above can be determined in $O(1)$ time~\cite{ref:ChazelleFr86}. As such,
$a(i)$ can be computed in $O(\log n)$ time. The total time for computing $a(i)$ for all disks $s_i\in S$ is $O(m\log n)$. 

In summary, computing $a(i)$ for all disks $s_i\in S$ takes $O((n+m)\log (n+m))$ time in total. 
\qed
\end{proof}

Combining Lemma~\ref{lem:50} and \ref{lem:60} leads to an algorithm of $O((n+m)\log (n+m))$ time for the line-separable unit-disk case.

\bibliographystyle{plainurl}

\end{document}